\newtheorem{theorem}{Theorem}[section]
\newtheorem{lemma}[theorem]{Lemma}
\newtheorem{proposition}[theorem]{Proposition}
\newtheorem{corollary}[theorem]{Corollary}
\newtheorem{example}[theorem]{Example}
\newcommand{\I}{{{I\hspace*{-0.2ex}}}}
\newcommand{\II}{{{I\hspace*{-0.5ex}I\hspace*{-0.2ex}}}}
\newcommand{\Ip}{{{I\hspace*{-0.2ex}}^\prime}}
\newcommand{\IIp}{{{I\hspace*{-0.5ex}I\hspace*{-0.2ex}}^\prime}}
\begin{document}
\markboth{Arunwan Boripan, Somphong Jitman, and Patanee Udomkavanich}
{Characterization and Enumeration of Complementary Dual Abelian Codes}

\title{Characterization and Enumeration of Complementary Dual Abelian Codes\thanks{This research was supported by the Thailand Research Fund under Research
    Grant MRG6080012.}
}


\author{Arunwan Boripan\thanks{A. Boripan is with the 
        Department of Mathematics and Computer Science, Faculty of Science,  Chulalongkorn University,   Bangkok 10330,  Thailand. 
        \texttt{boripan-arunwan@hotmail.com}} , Somphong Jitman\thanks{S. Jitman (Corresponding Author) is with the 
    Department of Mathematics, Faculty of Science,  Silpakorn University,  Nakhon Pathom 73000,  Thailand.    \texttt{sjitman@gmail.com}}, and Patanee Udomkavanich\thanks{P. Udomkavanich is with the 
Department of Mathematics and Computer Science, Faculty of Science,  Chulalongkorn University,   Bangkok 10330,  Thailand. \texttt{pattanee.u@chula.ac.th}}
}

\maketitle

\begin{abstract}
 Abelian codes and  complementary dual codes form   important classes of linear codes that have been extensively studied due to their rich algebraic structures and wide applications. In this paper, a family of abelian codes with complementary dual   in a group algebra $\mathbb{F}_{p^\nu}[G]$ has been studied under both the Euclidean and Hermitian inner products, where $p$ is a prime, $\nu$ is a positive integer, and $G$ is an arbitrary finite abelian group.
Based   on  the discrete Fourier transform decomposition for  semi-simple  group algebras and properties of ideas in local group algebras, the characterization of such codes have been  given. Subsequently,   the number of complementary dual abelian codes in  $\mathbb{F}_{p^\nu}[G]$  has been shown to be independent of the Sylow $p$-subgroup of $G$ and it has been completely determined for every  finite abelian group $G$.  In some cases, a simplified formula for the enumeration has been provided as well. 
The known results for cyclic complementary dual codes      can be viewed as corollaries. 
 
\end{abstract}

\noindent{Keywords: Complementary dual codes,  Abelian codes,  Group algebras, Local group algebra}

\noindent{Mathematics Subject Classification: 94B15,  94B05,   16A26}

 \section{Introduction} 

Abelian codes  form an important class of linear codes that has been extensively studied for both theoretical and practical reasons (see, for example,  \cite{BS2011}, \cite{Ch1992}, \cite{DKL2000}, \cite{JLLX2012},  \cite{JLS2014}, \cite{RS1992},  and \cite{S1993}).  Moreover, this  class of codes  contains the classical  cyclic codes which  can be efficiently encoded and decoded using shift registers.   Linear complementary dual (LCD) codes introduced in \cite{M1992}  constitute another interesting class of linear codes.  It  has been shown that   LCD codes are  asymptotically good and  they can be suitably applied    for the two-user  binary adder channel in  \cite{M1992}.  In \cite{Sen2004}, it has been shown that LCD codes meet the Gilbert-Varshamov bound.  LCD codes are also useful  in information protection from side-channel attacks and hardware Trojan horses (see \cite{CG2015},\cite{Carletetal2015},\cite{EHH2011}, \cite{ISW2003}, \cite{Ngoetal2014}, and  \cite{Ngoetal2015}). In \cite{GJG2016},  LCD codes have been applied in constructing good entanglement-assisted quantum error correcting codes. 
It is therefore of natural interest to study algebraic structured codes with complementary dual. Cyclic codes with complementary dual have been studied in \cite{YM1994}.  Necessary and sufficient conditions for cyclic codes to be complementary dual with respect to the Euclidean inner product  have been given in  \cite{YM1994}. 

In this paper,  we focus on a more general setup. Precisely,  abelian codes with complementary dual are studied  under both the Euclidean and Hermitian inner products.  We focus on  the characterization and enumeration of  such codes.  Based on the discrete Fourier transform decomposition for semi-simple group algebras, the group algebra $\mathbb{F}_{p^\nu}[G]$ can be view as a product of local group algebras for all finite abelian groups $G$. The complementary dual abelian codes and the direct summand ideals  in each  local component  are characterized. The  characterization and enumeration of complementary dual abelian codes can be obtained through the    discrete Fourier transform decomposition  and the characterization of ideals in local group algebras mentioned above.   A simplified formula for the enumeration is given as well  in  the case where the underlying group is a cyclic group  or a $q$-group, where  $q$ is a prime such that $q\neq p$. 

The paper is organized as follows.  Basic properties of group algebras and abelian codes are recalled in Section 2.  Complementary dual abelian codes in some local group algebras are studied in Section 3.  Based on the results in Section 3, it is shown that  the characterization  and enumeration of  Euclidean   complementary dual abelian codes over  finite fields of characteristics $p$  are  independent of the Sylow $p$-subgroup of $G$. The complete characterization and enumeration of Euclidean complementary dual abelian codes are given  in Section 4.  The analogous results for Hermitian complementary dual abelian codes are given in Section 5.  

\section{Preliminaries}

Let $R$ be a  commutative ring with identity and let $G$ be a finite abelian group, written additively.   Denote by    $R[G]$   the {\it group ring} of
$G$ over~$R$. The elements in $R[G]$ will be written as $\sum_{g\in G}\alpha_{{g }}Y^g $,
where $ \alpha_{g }\in R$.  The addition and the multiplication in $ R[G]$ are  given as in the usual polynomial rings over  $R$ with the indeterminate $Y$, where the indices are computed additively in $G$.    The homomorphism  $\epsilon:  R[G]\to R$ defined by \begin{align}
\label{aug}
\sum_{g\in G}\alpha_{{g }}Y^g  \mapsto \sum_{g\in G}\alpha_{{g }}
\end{align} is called  an {\em augmentation map} and the kernel $\Delta_R(G):=\ker (\epsilon)$ is called an {\em augmentation ideal} of $ R[G]$.

For  a prime $p$ and a positive integer $\nu$,    let  $\mathbb{F}_{p^\nu}$ denote   the  finite field of order  $p^\nu$.  In the case where $R =\mathbb{F}_{p^\nu}$, the group ring $\mathbb{F}_{p^\nu}[G]$ is called a {\em group algebra}.
An {\em abelian code} in  $\mathbb{F}_{p^\nu}[G]$  is   defined to be  an ideal in $\mathbb{F}_{p^\nu}[G]$ (see \cite{JLLX2012} and \cite{JLS2014}). The {\em Euclidean inner product}  between $u=\sum_{g\in G}u_{{g }}Y^g $ and  $v=\sum_{g\in G}v_{{g }}Y^g $   in $\mathbb{F}_{p^\nu}[G]$  is defined to be 
$\langle u,v\rangle _{\rm E}:= \sum_{g\in G} u_gv_g.$ The {\em Euclidean dual} of an abelian code $C$ in  $\mathbb{F}_{p^\nu}[G]$ is defined to be $C^{\perp_{\rm E}}:=\{v\in \mathbb{F}_{p^\nu}[G]\mid  \langle c,v\rangle _{\rm E} =0 \text{ for all } c\in C\}$.  An abelian code  $C$ in $\mathbb{F}_{p^\nu}[G]$  is said to be  {\em Euclidean complementary  dual}   if $ C\cap C^{\perp_{\rm E}}=\{0\}$.
In $\mathbb{F}_{{p^{2\nu}}}[G]$, the {\em Hermitian inner product}  between $u=\sum_{g\in G}u_{{g }}Y^g $ and  $v=\sum_{g\in G}v_{{g }}Y^g $   $\mathbb{F}_{p^{2\nu}}[G]$  is defined to be 
$\langle u,v\rangle _{\rm H}:= \sum_{g\in G} u_gv_g^{p^\nu}.$   
The {\em Hermitian dual} of an abelian code $C$ in  $\mathbb{F}_{p^{2\nu}}[G]$ is defined to be $C^{\perp_{\rm H}}:=\{v\in \mathbb{F}_{p^{2\nu}}[G]\mid  \langle c,v\rangle _{\rm H} =0 \text{ for all } c\in C\}$. 
An abelian code  $C$ in $\mathbb{F}_{p^{2\nu}}[G]$  is said to be  {\em Hermitian complementary  dual}    if $ C\cap C^{\perp_{\rm H}}=\{0\}$.

For  positive integers $i$ and $j$ with $\gcd(i,j)=1$, let ${\rm ord}_j(i)$ denote the multiplicative order of $i$ modulo $j$.   
Let $P$ denote the Sylow $p$-subgroup of $G$. Then $G\cong  A\times P$, where $A$ is a subgroup $G$ such that $|A|=[G:P]$ and $p\nmid |A|$.   
For each $a\in A$, denote by ${\rm ord}(a)$ the additive order of $a$ in $A$. A {\it $p^\nu$-cyclotomic class}   of $A$ containing $a\in A$, denoted by $S_{p^\nu}(a)$, is defined to be the set
\begin{align*}
S_{p^\nu}(a):=&\{p^{ \nu i}\cdot a \mid i=0,1,\dots\}
=\{p^{ \nu i}\cdot a \mid 0\leq i< {\rm ord}_{{\rm ord}(a)}(p^\nu) \}, 
\end{align*}
where $p^{\nu i}\cdot a:= \sum\limits_{j=1}^{p^{\nu i}}a$ in $A$.

First, we consider the decomposition of    $\mathcal{R}:=\mathbb{F}_{p^{\nu}}[A]$.  In this case, $\mathcal{R}$ is semi-simple (see \cite{Be1967_2}) which can be  decomposed  using the Discrete Fourier Transform  in  \cite{RS1992}   (see \cite{JLS2014} and \cite{JLLX2012} for more details).  For completeness, the decomposition used in this paper is  summarized as follows.

An {\em idempotent} in $\mathcal{R}$ is a nonzero element $e$ such that $e^2=e$. It is called {\em primitive} if for every other idempotent $f$, either $ef=e$ or $ef=0$.  The primitive idempotents in $\mathcal{R}$ are induced by the $p^\nu$-cyclotomic classes of $A$ (see \cite[Proposition II.4]{DKL2000}).   Let $\{ a_1,a_2,\dots, a_t\}$ be a complete set of representatives of $p^\nu$-cyclotomic classes  of $A$  and let $e_i$ be the primitive idempotent induced by $S_{p^{\nu}}(a_i)$  for all $1\leq i\leq t$.  From \cite{RS1992},   $\mathcal{R}$ can be decomposed as 
\begin{align}\label{eq-decom0}
\mathcal{R}= \bigoplus_{i=1}^t\mathcal{R}e_i .
\end{align}
It is well known (see \cite{JLLX2012}  and \cite{JLS2014}) that $\mathcal{R}e_i\cong \mathbb{F}_{p^{\nu s_i}}$,  where  $s_i=|S_{p^{\nu}}(a_i)|$  provided that   $e_i$ is induced by $S_{p^{\nu}}(a_i)$, and hence,
\begin{align}\label{eq-decom01}
\mathbb{F}_{p^{\nu}}[A\times P]  
\cong \bigoplus_{i=1}^t(\mathcal{R}e_i )[P]\cong \prod_{i=1}^t \mathbb{F}_{p^{\nu s_i}}[P]   .
\end{align}
Therefore, every abelian code $C$ in  $  \mathbb{F}_{p^{\nu}}[A\times P]  $ can be viewed as   

\begin{align} \label{cyclicrep} C\cong \prod_{i=1}^t C_i, \end{align}
where $C_i$ is  an abelian code  in  $\mathbb{F}_{p^{\nu s_i}}[P]$ for all $i=1,2,\dots, t$.  

The goal of this paper is to characterize and enumerate complementary dual abelian codes in $\mathbb{F}_{p^{\nu }}[G]$ for all finite abelian groups $G$, primes $p$, and positive integers $\nu$ .  The characterization of each component  in the decomposition  \eqref{eq-decom01} is given in Section 3. Combining the results in Sections 2 and 3, the characterization and enumeration of  complementary dual abelian codes in $\mathbb{F}_{p^{\nu }}[G]$ are given in Sections 4-5.

\section{Abelian Codes in some Local Group Algebras}
In this section,     we focus on   complementary dual abelian codes and direct summand ideals    in  each component  $\mathbb{F}_{p^\nu}[P]$ in the decomposition \eqref{eq-decom01}. 

For  a  finite commutative ring $R$ with identity, the {\em Jacobson radical}  of    $R$, denoted by  $Jac(R)$,   is defined to be the intersection of  all  maximal ideals of $R$.  The ring $R$ is said to be
\textit{local} if it has a unique maximal ideal and it  is called a
\textit{chain ring} if its ideals are  linearly ordered by
inclusion.  A finite commutative chain ring $R$  with maximal ideal $M$ is said to have {\em nilpotency index} $e$ if   $e$ is the smallest positive integer such that $M^e=0$.   The following facts are well known. 

\begin{proposition}\label{ring1}
    A finite commutative  ring $R$ with identity is a local ring if
    and only if the set of all non-invertible elements in $R$ forms a
    maximal ideal.
\end{proposition}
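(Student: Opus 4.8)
The plan is to prove the two implications separately, using only the elementary observation that an ideal $I$ of the commutative ring $R$ is proper if and only if it contains no invertible element, together with the fact that in a finite commutative ring every proper ideal is contained in a maximal ideal (which here is immediate: there are only finitely many ideals, so one may simply pick a maximal element among the proper ideals containing the given one). Throughout, write $U$ for the set of invertible elements of $R$ and $N := R\setminus U$ for the set of non-invertible elements.

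For the ``if'' direction, suppose $N$ is a maximal ideal of $R$. Let $M$ be any maximal ideal of $R$. Since $M$ is proper, it contains no unit, hence $M\subseteq N$; as $M$ is maximal and $N\neq R$, this forces $M=N$. Therefore $N$ is the unique maximal ideal of $R$, i.e. $R$ is local.

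For the ``only if'' direction, suppose $R$ is local with unique maximal ideal $M$. Since $M$ is proper, every element of $M$ is a non-unit, so $M\subseteq N$. Conversely, let $x\in N$, so $x$ is not invertible; then the principal ideal $xR$ is proper (otherwise $1\in xR$ would make $x$ a unit), hence $xR$ is contained in some maximal ideal of $R$, which by hypothesis must be $M$, and in particular $x\in M$. Thus $N\subseteq M$, and combining the two inclusions yields $N=M$. Consequently the set $N$ of non-invertible elements coincides with $M$ and is in particular a maximal ideal.

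No genuine obstacle arises: the statement is essentially a reformulation of the definition of a local ring, and the only point requiring a moment's care is the step ``proper ideal $\Rightarrow$ contained in a maximal ideal,'' which in the finite setting needs no Zorn-type argument. The finiteness hypothesis is therefore not essential to the proof; it is retained here because the ring $R$ will always be a finite group algebra (or a quotient of one) in the applications that follow.
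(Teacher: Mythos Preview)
Your proof is correct and is the standard argument for this classical fact. Note that the paper itself does not supply a proof of this proposition at all: it is listed among ``the following facts [that] are well known'' and left without proof, so there is no paper-proof to compare against; your write-up simply fills in the routine details the authors chose to omit.
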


\begin{proposition}\label{ring2}
    A finite commutative  local ring $R$ is a chain ring if and only if its
    maximal ideal is principal.
\end{proposition}
A local group ring has been  characterized in the following lemma.
\begin{lemma}[{\cite[Theorem]{N1972}}]  \label{lemLocal}Let $R$ be a commutative ring with identity and let $G$ be a finite  abelian group. Then  $R[G]$ is   local if and only if $R$ is local, $G$ is a $p$-group and
    $p\in Jac(R)$, where $Jac(R)$ is the Jacobson radical of $R$. 
\end{lemma}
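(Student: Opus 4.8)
The plan is to prove both implications by reducing modulo the Jacobson radical of $R$ and then invoking Maschke's theorem together with the structural fact that the group algebra of a $p$-group over a field of characteristic $p$ is local.

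For the forward implication, suppose $R[G]$ is local. The augmentation map $\epsilon$ of \eqref{aug} identifies $R$ with the quotient $R[G]/\Delta_R(G)$, and a quotient of a commutative local ring is again local, so $R$ is local; write $\mathfrak m=Jac(R)$ for its maximal ideal and $k=R/\mathfrak m$ for its residue field. The ideal $\mathfrak m[G]$ is proper in $R[G]$ (apply $\epsilon$), so $k[G]\cong R[G]/\mathfrak m[G]$ is a nonzero quotient of $R[G]$ and hence local. It remains to deduce that $k$ has positive characteristic $p$ with $G$ a $p$-group, which is exactly the condition $p\in\mathfrak m=Jac(R)$. If $G$ were nontrivial and $k$ had characteristic $0$, or characteristic $\ell$ with $\ell\nmid|G|$, then Maschke's theorem would make $k[G]$ semisimple, i.e.\ a product of fields, which is local only when $G$ is trivial; hence $p:=\operatorname{char}k$ divides $|G|$. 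If $G$ were not a $p$-group, write $G\cong Q\times H$ with $Q$ the Sylow $p$-subgroup and $1\neq H$ of order prime to $p$; then $k[H]$ is a quotient of $k[G]$ (kill the ideal generated by the elements $Y^q-1$, $q\in Q$), hence local, yet semisimple by Maschke, forcing $H$ trivial, a contradiction. Thus $G$ is a $p$-group and $p\in Jac(R)$.

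For the converse, assume $R$ is local with maximal ideal $\mathfrak m=Jac(R)$, that $G$ is a $p$-group, and that $p\in\mathfrak m$; the last condition says precisely that $k=R/\mathfrak m$ has characteristic $p$. First I would establish, by induction on $|G|$, that $k[G]$ is local whenever $k$ is a field of characteristic $p$ and $G$ is a finite abelian $p$-group: by Cauchy there is a subgroup $H=\langle h\rangle\cong\mathbb Z_p$, the kernel of the natural surjection $k[G]\twoheadrightarrow k[G/H]$ is the ideal generated by $Y^h-1$, and $(Y^h-1)^p=Y^{ph}-1=0$ since $\operatorname{char}k=p$ and $ph=0$, so this kernel is nilpotent; as $k[G/H]$ is local by induction and a commutative ring that is local modulo a nilpotent ideal is local, $k[G]$ is local. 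Next, $R[G]$ is module-finite (indeed free of rank $|G|$) over $R$, so a Nakayama argument gives $\mathfrak m[G]\subseteq Jac(R[G])$: for $y\in\mathfrak m[G]$ the cyclic $R[G]$-module $M:=R[G]/(1-y)R[G]$ is finitely generated over $R$ and satisfies $M=IM$ for the finitely generated ideal $I\subseteq\mathfrak m$ generated by the coefficients of $y$ (because $\bar 1=\bar y\in IM$ in $M$), hence $M=0$ by Nakayama's lemma, so $1-y$ is a unit. Therefore every maximal ideal of $R[G]$ contains $\mathfrak m[G]$, so maximal ideals of $R[G]$ correspond bijectively to maximal ideals of $R[G]/\mathfrak m[G]\cong k[G]$; since the latter has a unique maximal ideal, so does $R[G]$.

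The routine ingredients are Maschke's theorem and the facts that quotients, and quotients by nilpotent ideals, preserve locality, together with the correspondence of maximal ideals under a quotient by an ideal lying in the Jacobson radical. The two points that deserve full detail --- and which I expect to be the only real obstacles --- are the inductive argument that $k[G]$ is local for a $p$-group $G$ over a characteristic-$p$ field, and the verification that $Jac(R)\cdot R[G]\subseteq Jac(R[G])$ via Nakayama's lemma, where some care is needed because $R$ is not assumed Noetherian, so one must pass to a finitely generated sub-ideal of $\mathfrak m$. (The degenerate case in which $G$ is trivial, so $R[G]=R$, is immediate.)
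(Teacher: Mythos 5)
Your proof is correct, but there is nothing in the paper to compare it against: the paper does not prove this lemma, it simply imports it as Nicholson's theorem \cite{N1972}. Your argument is a clean, self-contained proof in exactly the generality the paper needs (commutative $R$, abelian $G$), whereas Nicholson's original theorem is stated for arbitrary rings and arbitrary finite groups and therefore cannot lean on commutative semisimplicity being ``a product of fields'' or on the determinant trick the way you do. Each step checks out: the forward direction correctly reduces to $k[G]$ and uses Maschke plus the nonvanishing of $\Delta_k(G)$ to force $\operatorname{char}k=p\mid |G|$ and then $G=Q$; the converse correctly combines the induction via $(Y^h-1)^p=Y^{ph}-1=0$ (nilpotent kernel, and locality lifts over nilpotent ideals since these lie in every maximal ideal) with the inclusion $\mathfrak m[G]\subseteq Jac(R[G])$ to transport the unique maximal ideal of $k[G]$ up to $R[G]$. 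Two minor remarks. First, the Nakayama step is fine but the detour through a finitely generated sub-ideal of $\mathfrak m$ is unnecessary: the determinant-trick form of Nakayama applies to any ideal contained in $Jac(R)$ acting on a finitely generated module, with no Noetherian or finite-generation hypothesis on the ideal. Second, the biconditional as literally stated degenerates when $G$ is trivial (e.g.\ $R=\mathbb{Q}$, $G=\{0\}$ gives a local $R[G]$ although no prime lies in $Jac(R)$); this is an artifact of the statement rather than of your proof, and it is irrelevant to the paper, which only uses the ``if'' direction for $\mathbb{F}_{p^\nu}[P]$ where $p=0\in Jac(\mathbb{F}_{p^\nu})$.
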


From Lemma \ref{lemLocal},  a group algebra  $\mathbb{F}_{p^\nu}[P]$ is  local for all $p$-groups $P$.   More properties of  a  local group algebra  $\mathbb{F}_{p^\nu}[P]$  can be deduced as follows. 

\begin{proposition} \label{propLocal}
    Let $p$ be a prime and let $\nu$ be a positive  integer. Let $P$ be a finite abelian $p$-group.  Then the following statements holds.
    \begin{enumerate}
        \item   $\mathbb{F}_{p^\nu}[P]$ is a local group algebra with maximal ideal $ \Delta_{\mathbb{F}_{p^\nu}}(P)$. 
        \item    $\mathbb{F}_{p^\nu}[\mathbb{Z}_{p^k}]$ is a finite chain ring  with nilpotency index $p^k$ for all positive integers $k$. .
    \end{enumerate}
\end{proposition}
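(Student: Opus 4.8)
The plan is to obtain both parts from the structural facts already recorded, with essentially one small computation. For part~(1), I would apply Lemma~\ref{lemLocal} with $R=\mathbb{F}_{p^\nu}$ and $G=P$: the field $\mathbb{F}_{p^\nu}$ is local with $Jac(\mathbb{F}_{p^\nu})=\{0\}$, and since $\mathbb{F}_{p^\nu}$ has characteristic $p$ we have $p=0\in Jac(\mathbb{F}_{p^\nu})$, while $P$ is a $p$-group by hypothesis; hence $\mathbb{F}_{p^\nu}[P]$ is local. To pin down its unique maximal ideal, I would note that the augmentation map $\epsilon$ in~\eqref{aug} is a surjective ring homomorphism onto the field $\mathbb{F}_{p^\nu}$, so $\Delta_{\mathbb{F}_{p^\nu}}(P)=\ker(\epsilon)$ is a maximal ideal, and by locality it must be the maximal ideal. (Alternatively, one checks directly that $\sum_{g\in P}\alpha_g Y^g$ is non-invertible if and only if $\sum_{g\in P}\alpha_g=0$, using that each $Y^g-1$ is nilpotent since $g$ has $p$-power order, and then invokes Proposition~\ref{ring1}.)

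For part~(2), by Proposition~\ref{ring2} it suffices to show that the maximal ideal of $\mathbb{F}_{p^\nu}[\mathbb{Z}_{p^k}]$ is principal. Fixing a generator of $\mathbb{Z}_{p^k}$ and sending it to $Y$ gives a ring isomorphism $\mathbb{F}_{p^\nu}[\mathbb{Z}_{p^k}]\cong\mathbb{F}_{p^\nu}[Y]/(Y^{p^k}-1)$ that carries $\Delta_{\mathbb{F}_{p^\nu}}(\mathbb{Z}_{p^k})$ onto the ideal generated by $Y-1$. In characteristic $p$ the Frobenius identity gives $Y^{p^k}-1=(Y-1)^{p^k}$, so $\mathbb{F}_{p^\nu}[\mathbb{Z}_{p^k}]\cong\mathbb{F}_{p^\nu}[Y]/\bigl((Y-1)^{p^k}\bigr)$, whose maximal ideal is the principal ideal generated by $Y-1$; Proposition~\ref{ring2} then yields that $\mathbb{F}_{p^\nu}[\mathbb{Z}_{p^k}]$ is a finite chain ring.

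It then remains to compute the nilpotency index, that is, the least integer $e\geq 1$ with $(Y-1)^e\equiv 0\pmod{(Y-1)^{p^k}}$. Plainly $(Y-1)^{p^k}\equiv 0$, whereas $(Y-1)^{p^k-1}$ is a polynomial of degree $p^k-1<p^k$ and hence represents a nonzero element of $\mathbb{F}_{p^\nu}[Y]/\bigl((Y-1)^{p^k}\bigr)$; therefore the nilpotency index equals exactly $p^k$, which completes part~(2).

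I do not expect any substantive obstacle: the whole argument is a direct application of Propositions~\ref{ring1}--\ref{ring2} and Lemma~\ref{lemLocal} together with the Frobenius factorization $Y^{p^k}-1=(Y-1)^{p^k}$. The only points needing care are the correct identification of $\Delta_{\mathbb{F}_{p^\nu}}(\mathbb{Z}_{p^k})$ with the ideal generated by $Y-1$ under the isomorphism $\mathbb{F}_{p^\nu}[\mathbb{Z}_{p^k}]\cong\mathbb{F}_{p^\nu}[Y]/(Y^{p^k}-1)$, and the degree count showing that the nilpotency index is attained precisely at $p^k$ and not earlier.
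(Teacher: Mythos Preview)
Your proposal is correct and follows essentially the same approach as the paper: for part~(1) you invoke Lemma~\ref{lemLocal} and then identify the maximal ideal via the augmentation map, and for part~(2) you show $\Delta_{\mathbb{F}_{p^\nu}}(\mathbb{Z}_{p^k})=\langle Y-1\rangle$ is principal, apply Proposition~\ref{ring2}, and compute the nilpotency index from $(Y-1)^{p^k}=0\ne (Y-1)^{p^k-1}$. Your write-up is in fact slightly more detailed than the paper's (explicitly citing the Frobenius factorization $Y^{p^k}-1=(Y-1)^{p^k}$ and the degree argument for nonvanishing), but the underlying argument is the same.
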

\begin{proof}
    Since $\mathbb{F}_{p^\nu}$  is a field of characteristic $p$,  $\mathbb{F}_{p^\nu}$ is local and $p\in \{0\}= Jac(\mathbb{F}_{p^\nu})$. By Lemma \ref{lemLocal}, $\mathbb{F}_{p^\nu}[P]$ is local.   Using the map $\epsilon$ defined in (\ref{aug}) and the first isomorphism for rings, we have  \[ \mathbb{F}_{p^\nu}\cong \mathbb{F}_{p^\nu}[P] /\ker (\epsilon) \cong   \mathbb{F}_{p^\nu}[P]  /  \Delta_{\mathbb{F}_{p^\nu} }(P),\]
    and hence,   $\Delta_{\mathbb{F}_{p^\nu} }(P)$ is the maximal ideal of $\mathbb{F}_{p^\nu}[P]$. This completes the proof of  $1$.

    It is not difficult to see that  the maximal ideal  $\Delta_{\mathbb{F}_{p^\nu} }(\mathbb{Z}_{p^k})\cong \langle Y-1\rangle$  is principal. By Proposition \ref{ring2}, $\mathbb{F}_{p^\nu}[\mathbb{Z}_{p^k}]$ is a finite chain ring. Since $(Y-1)^{p^k}=0$ and  $(Y-1)^{p^k-1}\ne 0$, $\mathbb{F}_{p^\nu}[\mathbb{Z}_{p^k}]$ has nilpotency index $p^k$. This completes the proof of $2$.  \end{proof}

\begin{corollary}  Let $p$ be a prime and let $\nu$ be a positive integer.  Then   $\mathbb{F}_{p^\nu}[\mathbb{Z}_{p^k}]$ contains $p^k+1$  ideals. 
\end{corollary}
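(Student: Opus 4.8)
The plan is to deduce the corollary directly from Proposition~\ref{propLocal}(2) together with the structure theory of finite chain rings recalled in Propositions~\ref{ring1} and~\ref{ring2}. By Proposition~\ref{propLocal}(2), the group algebra $\mathbb{F}_{p^\nu}[\mathbb{Z}_{p^k}]$ is a finite chain ring whose maximal ideal $M=\Delta_{\mathbb{F}_{p^\nu}}(\mathbb{Z}_{p^k})=\langle Y-1\rangle$ has nilpotency index $p^k$; that is, $M^{p^k}=0$ but $M^{p^k-1}\neq 0$.

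The key step is the standard fact that in a finite chain ring with maximal ideal $M$ of nilpotency index $e$, the complete list of ideals is the chain $R=M^0\supsetneq M^1\supsetneq M^2\supsetneq\cdots\supsetneq M^{e-1}\supsetneq M^e=\{0\}$, and these $e+1$ ideals are pairwise distinct. I would argue this as follows. Since $R$ is a chain ring, its ideals are linearly ordered by inclusion; let $I$ be any nonzero ideal and let $j$ be the largest integer with $I\subseteq M^j$ (such $j$ exists and satisfies $0\le j\le e-1$ because $M^e=0$ while $I\neq\{0\}$). Pick $x\in I$ with $x\notin M^{j+1}$. Writing $M=\langle \pi\rangle$ with $\pi=Y-1$, every element of $R$ has the form (unit)$\cdot\pi^{i}$ for a unique $i$ (this is where the chain-ring structure is used: the ideals $\langle\pi^i\rangle$ are exactly the $M^i$, and they are distinct by minimality of the nilpotency index), so $x=u\pi^{j}$ for a unit $u$, whence $\pi^{j}=u^{-1}x\in I$ and therefore $M^j=\langle\pi^j\rangle\subseteq I\subseteq M^j$. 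Thus $I=M^j$, and $R$ has exactly the $e+1$ ideals $M^0,M^1,\dots,M^e$; they are distinct since the nilpotency index is exactly $e$, so $M^{i}\neq M^{i+1}$ for $0\le i\le e-1$ (otherwise Nakayama or a direct length argument would force $M^i=0$ prematurely).

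Applying this with $e=p^k$ gives that $\mathbb{F}_{p^\nu}[\mathbb{Z}_{p^k}]$ has exactly $p^k+1$ ideals, which is the assertion. The main obstacle, such as it is, is simply making the "every element is a unit times a power of $\pi$" claim precise; this is routine for finite commutative chain rings, but it is the one place where one must invoke more than the bare definitions, so I would either cite a standard reference on finite chain rings or include the short induction above. Everything else is immediate from the propositions already established in the excerpt.
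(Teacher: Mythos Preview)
Your proposal is correct and follows essentially the same approach as the paper: both deduce the result from Proposition~\ref{propLocal}(2) by identifying the ideals of the chain ring $\mathbb{F}_{p^\nu}[\mathbb{Z}_{p^k}]$ as the powers $\langle (Y-1)^i\rangle$ for $0\le i\le p^k$. The paper's proof is a one-line citation of Proposition~\ref{propLocal} together with the explicit list of ideals, whereas you additionally justify why every ideal must be one of these powers; this extra detail is sound and fills in what the paper leaves implicit.
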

\begin{proof}
    The statement follows from Proposition \ref{propLocal}.  Precisely, the ideals in $\mathbb{F}_{p^\nu}[\mathbb{Z}_{p^k}]$ are of the form $\langle (Y-1)^i\rangle $ for all $0\leq i\leq p^k$.
\end{proof}

The characterizations of the Euclidean and Hermitian  complementary dual abelian codes and  the direct summands in a local group algebra $\mathbb{F}_{p^\nu}[P]$  are  given in the following theorems. 

\begin{theorem} \label{complementary}
    Let $p$ be a prime and let $\nu\geq 1$ be an integer.  Let $P$ be a finite abelian $p$-group.  Then $\{0\}$ and $\mathbb{F}_{p^\nu}[P]$ are the only Euclidean complementary  dual abelian codes in  $\mathbb{F}_{p^\nu}[P]$.
\end{theorem}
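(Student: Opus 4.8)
The plan is to exploit the fact, already established in Proposition~\ref{propLocal}, that $\mathbb{F}_{p^\nu}[P]$ is a \emph{local} ring, together with the elementary observation that a Euclidean complementary dual code is a vector-space complement of its dual and hence, once one knows the dual is also an ideal, forces the ambient ring to split as a direct sum of two ideals --- something a local ring cannot do nontrivially. Before running the argument I would record three routine facts. (i) The Euclidean dual $C^{\perp_{\rm E}}$ of an abelian code $C$ in $\mathbb{F}_{p^\nu}[P]$ is again an abelian code: from $(Y^gc)_h=c_{h-g}$ and $(Y^{-g}v)_h=v_{h+g}$ one checks $\langle Y^gc,v\rangle_{\rm E}=\langle c,Y^{-g}v\rangle_{\rm E}$ for all $g\in P$, so if $v\in C^{\perp_{\rm E}}$ then $Y^{-g}v\in C^{\perp_{\rm E}}$ (because $Y^gc\in C$), whence $C^{\perp_{\rm E}}$ is stable under multiplication by all group elements and is an ideal. (ii) Regarding $\mathbb{F}_{p^\nu}[P]$ as $\mathbb{F}_{p^\nu}^{|P|}$ with the ordinary dot product, nondegeneracy gives $\dim_{\mathbb{F}_{p^\nu}}C+\dim_{\mathbb{F}_{p^\nu}}C^{\perp_{\rm E}}=|P|$. (iii) A finite commutative local ring has no idempotents besides $0$ and $1$: if $e^2=e$ lies in the unique maximal ideal $M$, then $1-e\notin M$ is a unit and $e(1-e)=0$ forces $e=0$; otherwise $e$ is a unit and $e(1-e)=0$ forces $e=1$.

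With these in hand the proof is short. Let $C$ be a Euclidean complementary dual abelian code in $\mathbb{F}_{p^\nu}[P]$ with $C\neq\{0\}$; I must show $C=\mathbb{F}_{p^\nu}[P]$. By (i) the dual $C^{\perp_{\rm E}}$ is an ideal, and by (ii) together with the hypothesis $C\cap C^{\perp_{\rm E}}=\{0\}$ we get an internal direct sum $\mathbb{F}_{p^\nu}[P]=C\oplus C^{\perp_{\rm E}}$. Write $1=e+f$ with $e\in C$ and $f\in C^{\perp_{\rm E}}$. Since both $C$ and $C^{\perp_{\rm E}}$ are ideals, $ef\in C\cap C^{\perp_{\rm E}}=\{0\}$, so $e=e\cdot 1=e^2+ef=e^2$, i.e.\ $e$ is idempotent. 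By Proposition~\ref{propLocal} and fact (iii), $e\in\{0,1\}$. If $e=0$ then $1=f\in C^{\perp_{\rm E}}$, so $C^{\perp_{\rm E}}=\mathbb{F}_{p^\nu}[P]$ and $C=C\cap C^{\perp_{\rm E}}=\{0\}$, contradicting $C\neq\{0\}$. Hence $e=1\in C$ and $C=\mathbb{F}_{p^\nu}[P]$. The reverse direction is trivial: $\{0\}$ and $\mathbb{F}_{p^\nu}[P]$ are plainly Euclidean complementary dual.

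I do not anticipate a genuine obstacle here; the only points requiring attention are verifying that $C^{\perp_{\rm E}}$ is an ideal and the dimension identity, both of which are standard in the theory of abelian codes. The conceptual crux --- the step I would flag --- is the passage from ``$C$ meets its dual trivially'' to ``$\mathbb{F}_{p^\nu}[P]$ is the internal direct sum of the ideals $C$ and $C^{\perp_{\rm E}}$'', after which the absence of nontrivial idempotents in a local ring immediately collapses the possibilities to $C=\{0\}$ and $C=\mathbb{F}_{p^\nu}[P]$.
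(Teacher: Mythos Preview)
Your proof is correct but proceeds along a genuinely different line from the paper's. The paper argues directly with the augmentation ideal: any proper nonzero ideal $C$ sits inside $\Delta_{\mathbb{F}_{p^\nu}}(P)$, and likewise $C^{\perp_{\rm E}}$ (being a proper nonzero ideal) sits inside $\Delta_{\mathbb{F}_{p^\nu}}(P)$; dualizing then shows the one-dimensional code $\Delta_{\mathbb{F}_{p^\nu}}(P)^{\perp_{\rm E}}$ (the repetition code) lies in both $C$ and $C^{\perp_{\rm E}}$, furnishing an explicit nonzero element of the intersection. Your route instead passes through the observation that complementary duality forces $\mathbb{F}_{p^\nu}[P]=C\oplus C^{\perp_{\rm E}}$ as a direct sum of \emph{ideals}, and then invokes the absence of nontrivial idempotents in a local ring. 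This is cleaner and more conceptual: it is manifestly independent of which nondegenerate form is used (so Corollary~\ref{Hcomplementary} falls out for free), and it effectively proves Theorem~\ref{derectsummand} simultaneously rather than as a separate statement. The paper's argument, by contrast, is slightly more concrete in that it names the obstructing element, but it tacitly relies on the same auxiliary facts you spell out (that $C^{\perp_{\rm E}}$ is an ideal and the dimension identity) without making them explicit.
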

\begin{proof}  Clearly, $\{0\}$ and $\mathbb{F}_{p^\nu}[P]$ are Euclidean  complementary  dual abelian codes in  $\mathbb{F}_{p^\nu}[P]$. 
    Assume that there exists a Euclidean  abelian code $C$ in  $ \mathbb{F}_{p^\nu}[P]$ such that  $\{0\}\subsetneq C \subsetneq \mathbb{F}_{p^\nu}[P]$. By  Proposition \ref{propLocal}, $\mathbb{F}_{p^\nu}[P]$ is   local  with maximal ideal $\Delta_{\mathbb{F}_{p^\nu} }(P)$.  Then $C\subseteq \Delta_{\mathbb{F}_{p^\nu} }(P)$. It follows that  $\Delta_{\mathbb{F}_{p^\nu} }(P) ^{\perp_{\rm E}} \subseteq C^{\perp_{\rm E}}  \subseteq \Delta_{\mathbb{F}_{p^\nu} }(P)$ which implies  $\Delta_{\mathbb{F}_{p^\nu} }(P)^{\perp_{\rm E}}  \subseteq C \subseteq \Delta_{\mathbb{F}_{p^\nu} }(P)$. Hence,  $\{0\}\ne \Delta_{\mathbb{F}_{p^\nu} }(P)^{\perp_{\rm E}}  \subseteq C\cap C^{\perp_{\rm E}} \subseteq \Delta_{\mathbb{F}_{p^\nu} }(P)$. Consequently, $C$ is not  Euclidean complementary dual. Therefore, the ideals $\{0\}$ and $\mathbb{F}_{p^\nu}[P]$ are the only Euclidean complementary  dual abelian codes in  $\mathbb{F}_{p^\nu}[P]$.
\end{proof}

It is not difficult to see that the proof of Theorem \ref{complementary} is independent of the inner product. Hence, we have the following corollary. 

\begin{corollary} \label{Hcomplementary}
    Let $p$ be a prime and let $\nu$ be a positive integer.  Let $P$ be a finite abelian $p$-group.  Then $\{0\}$ and $\mathbb{F}_{p^{2\nu}}[P]$ are the only Hermitian complementary  dual abelian codes in  $\mathbb{F}_{p^{2\nu}}[P]$.
\end{corollary}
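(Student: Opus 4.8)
The plan is to run the argument of Theorem~\ref{complementary} verbatim, with the Hermitian inner product in place of the Euclidean one and $\mathbb{F}_{p^{2\nu}}$ in place of $\mathbb{F}_{p^\nu}$; every structural fact that proof used has an exact sesquilinear counterpart. Since $\{0\}$ and $\mathbb{F}_{p^{2\nu}}[P]$ are obviously Hermitian complementary dual, I would argue by contradiction and assume there is an abelian code $C$ with $\{0\}\subsetneq C\subsetneq\mathbb{F}_{p^{2\nu}}[P]$. By Proposition~\ref{propLocal}(1), applied with the field $\mathbb{F}_{p^{2\nu}}$, the ring $\mathbb{F}_{p^{2\nu}}[P]$ is local with unique maximal ideal $\Delta_{\mathbb{F}_{p^{2\nu}}}(P)$, and from its proof this ideal has codimension $1$. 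In particular $C\subseteq\Delta_{\mathbb{F}_{p^{2\nu}}}(P)$, since a proper ideal of a local ring lies in its maximal ideal.

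The facts I need about the Hermitian dual are the sesquilinear analogues of the ones used in the Euclidean case: (i) the Hermitian dual of an abelian code is again an abelian code, which follows from the identity $\langle Y^{-g}c,v\rangle_{\rm H}=\langle c,Y^gv\rangle_{\rm H}$; (ii) the Hermitian form is non-degenerate, hence $(D^{\perp_{\rm H}})^{\perp_{\rm H}}=D$ for every $\mathbb{F}_{p^{2\nu}}$-subspace $D$; and (iii) $\dim_{\mathbb{F}_{p^{2\nu}}}D^{\perp_{\rm H}}=|P|-\dim_{\mathbb{F}_{p^{2\nu}}}D$. Facts (ii) and (iii) follow from the standard rank–nullity argument for non-degenerate forms, since the Frobenius map $x\mapsto x^{p^\nu}$ is a field automorphism of $\mathbb{F}_{p^{2\nu}}$; in particular, because $\Delta_{\mathbb{F}_{p^{2\nu}}}(P)$ has codimension $1$, fact (iii) gives $\Delta_{\mathbb{F}_{p^{2\nu}}}(P)^{\perp_{\rm H}}\neq\{0\}$.

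With these in hand the chain of inclusions copies the Euclidean proof exactly: $C\subseteq\Delta_{\mathbb{F}_{p^{2\nu}}}(P)$ gives $\Delta_{\mathbb{F}_{p^{2\nu}}}(P)^{\perp_{\rm H}}\subseteq C^{\perp_{\rm H}}$; since $C\neq\{0\}$, the ideal $C^{\perp_{\rm H}}$ is proper by (iii) and hence contained in $\Delta_{\mathbb{F}_{p^{2\nu}}}(P)$; applying (ii) to the inclusion $C^{\perp_{\rm H}}\subseteq\Delta_{\mathbb{F}_{p^{2\nu}}}(P)$ yields $\Delta_{\mathbb{F}_{p^{2\nu}}}(P)^{\perp_{\rm H}}\subseteq C$. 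Combining, $\{0\}\neq\Delta_{\mathbb{F}_{p^{2\nu}}}(P)^{\perp_{\rm H}}\subseteq C\cap C^{\perp_{\rm H}}$, contradicting that $C$ is Hermitian complementary dual; so $\{0\}$ and $\mathbb{F}_{p^{2\nu}}[P]$ are the only ones. The only thing beyond a mechanical substitution is checking facts (i)--(iii), which is where the (minor) work lies; these are entirely routine because the Frobenius automorphism makes the Hermitian form behave, for duality purposes, just like a non-degenerate bilinear form.
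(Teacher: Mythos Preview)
Your proposal is correct and follows essentially the same approach as the paper, which simply notes that the proof of Theorem~\ref{complementary} is independent of the inner product and hence carries over verbatim to the Hermitian setting. You have merely made explicit the routine sesquilinear facts (i)--(iii) that the paper's one-line justification leaves implicit.
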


\begin{theorem}\label{derectsummand}
    Let $p$ be a prime and let $\nu$ be a positive integer.  Let $P$ be a finite abelian $p$-group.  Then ideals $\{0\}$ and $\mathbb{F}_{p^\nu}[P]$ are the only direct summands in  $\mathbb{F}_{p^\nu}[P]$.
\end{theorem}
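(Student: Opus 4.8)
The plan is to exploit the fact, established in Proposition~\ref{propLocal}, that $\mathbb{F}_{p^\nu}[P]$ is a local ring with maximal ideal $\Delta_{\mathbb{F}_{p^\nu}}(P)$. Recall that if $R = C \oplus D$ as a direct sum of ideals, then there are complementary central idempotents $e_C$ and $e_D$ with $e_C + e_D = 1$, $e_C e_D = 0$, $C = Re_C$ and $D = Re_D$. So it suffices to show that the only idempotents in $\mathbb{F}_{p^\nu}[P]$ are $0$ and $1$; the direct summand decompositions then correspond exactly to the partitions of $1$ into orthogonal idempotents, which forces $\{C,D\} = \{\{0\}, \mathbb{F}_{p^\nu}[P]\}$.

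First I would recall the standard fact that a finite commutative local ring has no idempotents other than $0$ and $1$. The argument: let $e$ be an idempotent. Then $e(1-e) = 0$. If $e$ were a non-invertible element other than $0$, then since $\mathbb{F}_{p^\nu}[P]$ is local, $e$ lies in the unique maximal ideal $\Delta_{\mathbb{F}_{p^\nu}}(P)$ by Proposition~\ref{ring1}; but $\Delta_{\mathbb{F}_{p^\nu}}(P)$ is nilpotent (its quotient is the field $\mathbb{F}_{p^\nu}$ and the ring is finite, or more concretely $P$ being a $p$-group makes the augmentation ideal nilpotent), so $e = e^n \to 0$, giving $e = 0$. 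On the other hand, if $e$ is invertible, then multiplying $e^2 = e$ by $e^{-1}$ gives $e = 1$. Hence $e \in \{0,1\}$.

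Next I would translate this back to direct summands. Suppose $\mathbb{F}_{p^\nu}[P] = C \oplus D$ with $C, D$ ideals. Write $1 = c + d$ with $c \in C$, $d \in D$. For any $x \in C$, write $x = c' + d'$; applying uniqueness of the decomposition to $x = x \cdot 1 = xc + xd$ (both $xc \in C$ since $C$ is an ideal, and $xd \in D$) shows $x = xc$, so in particular $c = c^2$, i.e.\ $c$ is idempotent; similarly $d = d^2$ and $cd = 0$. By the previous paragraph, $c \in \{0,1\}$. If $c = 0$ then $C = Cc = \{0\}$ and $D = \mathbb{F}_{p^\nu}[P]$; if $c = 1$ then $C \ni 1$ forces $C = \mathbb{F}_{p^\nu}[P]$ and $D = \{0\}$.

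The only step requiring a little care is justifying that the maximal ideal $\Delta_{\mathbb{F}_{p^\nu}}(P)$ is nil (equivalently, that every non-unit is nilpotent), which is what kills the non-trivial idempotent; this is where the hypothesis that $P$ is a $p$-group and $\mathbb{F}_{p^\nu}$ has characteristic $p$ genuinely enters, exactly as in the proof that $\mathbb{F}_{p^\nu}[P]$ is local. Alternatively, one can bypass idempotents entirely and argue as in the proof of Theorem~\ref{complementary}: a direct summand $C$ with complement $D$ satisfies $C \cap D = \{0\}$, and if $\{0\} \subsetneq C \subsetneq \mathbb{F}_{p^\nu}[P]$ then both $C$ and $D$ are contained in the maximal ideal $\Delta_{\mathbb{F}_{p^\nu}}(P)$ (since any proper ideal is), so $C \oplus D \subseteq \Delta_{\mathbb{F}_{p^\nu}}(P) \subsetneq \mathbb{F}_{p^\nu}[P]$, contradicting $C \oplus D = \mathbb{F}_{p^\nu}[P]$. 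This second route is shorter and I would likely present it as the main proof, mentioning the idempotent viewpoint as a remark. I expect no real obstacle here — the content is entirely that locality rules out non-trivial direct product structure.
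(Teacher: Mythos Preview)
Your proposal is correct, and the ``second route'' you sketch at the end --- observing that if $\{0\}\subsetneq C\subsetneq \mathbb{F}_{p^\nu}[P]$ has complement $C'$, then both $C$ and $C'$ lie in the unique maximal ideal $\Delta_{\mathbb{F}_{p^\nu}}(P)$, so $C\oplus C'\subseteq \Delta_{\mathbb{F}_{p^\nu}}(P)\subsetneq \mathbb{F}_{p^\nu}[P]$, a contradiction --- is exactly the paper's own proof. Your idempotent argument is also valid (and in fact one does not even need nilpotency of the maximal ideal: if $e\notin\{0,1\}$ is idempotent then $e$ and $1-e$ are both non-units, hence both in $\Delta_{\mathbb{F}_{p^\nu}}(P)$, forcing $1=e+(1-e)\in\Delta_{\mathbb{F}_{p^\nu}}(P)$), but since you yourself prefer the shorter containment argument as the main proof, you are already aligned with the paper.
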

\begin{proof}
    Let $\{0\}\subsetneq C \subsetneq \mathbb{F}_{p^\nu}[B]$  be  an  ideal in  $\mathbb{F}_{p^\nu}[P]$. Suppose that $C$ is  a direct summand. Then there exist an ideal $C^\prime $ in $\mathbb{F}_{p^\nu}[P]$ such that $C\cap C^\prime=\{0\}$ and $C+C^\prime =\mathbb{F}_{p^\nu}[P]$. Since $\Delta_{\mathbb{F}_{p^\nu} }(P)$ is the maximal ideal in $\mathbb{F}_{p^\nu}[P]$, we have  $C^\prime \subseteq  \Delta_{\mathbb{F}_{p^\nu} }(P) $.  Hence, $C+C^\prime\subseteq \Delta_{\mathbb{F}_{p^\nu} }(P) \subsetneq \mathbb{F}_{p^\nu}[P]$, a contradiction. Therefore,  the ideals $\{0\}$ and $\mathbb{F}_{p^\nu}[P]$ are the only direct summands in  $\mathbb{F}_{p^\nu}[P]$.
\end{proof}

The next corollary follows immediately. 
\begin{corollary}
    Let $p$ be a prime and let $\nu$ be a positive integer.  Let $P$ be a finite abelian $p$-group.  Then the following statements hold.
    \begin{enumerate}
        \item    The number of Euclidean complementary dual abelian codes in $\mathbb{F}_{p^\nu}[P]$   is $2$.
        \item    The number of  direct summand ideals in $\mathbb{F}_{p^\nu}[P]$ is $2$. 
        \item    The number of Hermitian  complementary dual abelian codes in $\mathbb{F}_{p^{2\nu}}[P]$   is $2$.
        \item    The number of   direct summand ideals in $\mathbb{F}_{p^{2\nu}}[P]$ is $2$. 
    \end{enumerate}
\end{corollary}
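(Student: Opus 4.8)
The plan is to read off all four counts directly from the structural results already established for local group algebras, so the work is purely a matter of tallying the distinguished ideals and checking they are distinct. First I would handle item~1: Theorem~\ref{complementary} asserts that $\{0\}$ and $\mathbb{F}_{p^\nu}[P]$ are the \emph{only} Euclidean complementary dual abelian codes in $\mathbb{F}_{p^\nu}[P]$, so it remains only to observe that these two ideals are genuinely different. This is immediate because $\mathbb{F}_{p^\nu}[P]$ contains the identity element $Y^{0}\neq 0$ (equivalently, $P$ is nonempty and $\mathbb{F}_{p^\nu}\neq\{0\}$), hence $\mathbb{F}_{p^\nu}[P]\neq\{0\}$, giving exactly $2$ such codes. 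Item~2 is the same argument applied to Theorem~\ref{derectsummand}, which lists $\{0\}$ and $\mathbb{F}_{p^\nu}[P]$ as the only direct summand ideals; again they are distinct, so the count is $2$.

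For items~3 and~4 I would note that $\mathbb{F}_{p^{2\nu}}$ is itself a finite field of characteristic $p$, namely $\mathbb{F}_{p^{\mu}}$ with $\mu=2\nu$, so Corollary~\ref{Hcomplementary} and Theorem~\ref{derectsummand} (the latter with $\nu$ replaced by $2\nu$) apply verbatim: the only Hermitian complementary dual abelian codes, respectively the only direct summand ideals, in $\mathbb{F}_{p^{2\nu}}[P]$ are $\{0\}$ and $\mathbb{F}_{p^{2\nu}}[P]$, which are distinct for the same reason as above, yielding $2$ in each case. There is no real obstacle here; the only point requiring a word of justification is the non-triviality $\{0\}\neq\mathbb{F}_{p^\nu}[P]$ (and its Hermitian analogue), and all four assertions then follow by simply counting the two members of each list.
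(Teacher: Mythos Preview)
Your proposal is correct and matches the paper's approach: the corollary is stated there as following immediately from Theorem~\ref{complementary}, Corollary~\ref{Hcomplementary}, and Theorem~\ref{derectsummand}, with no further argument given. Your only addition is the explicit remark that $\{0\}\neq\mathbb{F}_{p^\nu}[P]$, which is harmless and makes the count of two explicit.
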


\section{Euclidean  Complementary Dual Abelian Codes}

In this section,  we focus on  characterization and enumeration of Euclidean  complementary dual abelian codes in  $\mathbb{F}_{p^\nu}[G]$, where  $p $ is a prime, $\nu $ is a positive integer, and $G$ is an arbitrary finite abelian group.

\subsection{Characterization}

For a finite abelian group $G$, write $G=A\times P$, where $p\nmid |A|$ and $P$ is the Sylow $p$-subgroup of $G$. Based on the results in Section 3,   the characterization and enumeration of  complementary dual abelian codes in   $\mathbb{F}_{p^\nu}[G]$ are given and they are shown to be   independent of $P$.  

The decomposition of  $\mathbb{F}_{p^\nu}[A]$ in  \cite{JLLX2012} plays a vital role in the study of Euclidean complementary dual abelian codes in $\mathbb{F}_{p^\nu}[G]$ and it is recalled as follows. 
A $p^\nu$-cyclotomic class $S_{p^{\nu}}(a)$ is said to be of  {\em type} ${\I}$     if $S_{p^{\nu}}(a)=S_{p^{\nu}}(-a)$, or {\em type} ${\II}$   if $S_{p^{\nu}}(-a)\neq S_{p^{\nu}}(a)$. 	 
Without loss of generality, the representatives $a_1, a_2, \dots, a_t$ of  $p^\nu$-cyclotomic classes   of $A$  can be  chosen such that $\{a_j| j=1,2,\dots,{r_{\I}}\}$ and $\{a_{r_{\I}+l}, a_{r_{I}+r_{\II}+l}=-a_{r_{I}+l} \mid l=1,2,\dots, r_{\II}\}$ are  sets of representatives of $p^\nu$-cyclotomic classes of $A$ of types $\I$ and ${\II}$, respectively, where $t=r_{\I}+2r_{\II}$.    Assume the notations used  in Section 2, we have   $s_i=|S_{p^\nu}(a_i)|$ for all $1\leq i\leq t$  and  $s_{r_\I+l}=s_{r_\I+r_\II+l}$ for all $1\leq l\leq r_\II$.

Rearranging the terms in the decomposition of  $\mathcal{R}$  in  \eqref{eq-decom01}  based on these  $2$ types of cyclotomic classes (see \cite{JLLX2012}),  we have 
\begin{align}
\mathbb{F}_{p^{\nu}}[A\times P]    & 
\cong   \left( \prod_{j=1}^{r_{\I}} \mathbb{K}_j[P]  \right) \times \left( \prod_{l=1}^{r_{\II}} (\mathbb{L}_l[P]\times \mathbb{L}_l[P] )  \right), \label{eqSemiSim}
\end{align}
where   $ \mathbb{K}_j\cong \mathbb{F}_{p^{\nu s_{j}}}  $ for all   $j=1,2,\dots, r_{\I}$  and  $ \mathbb{L}_l  \cong \mathbb{F}_{  p^{\nu s_{r_\I+l}}}   $      for all $l=1,2,\dots, r_{\II}$.

From \eqref{eqSemiSim},  it follows that an abelian code $C$ in  $\mathbb{F}_{p^{\nu}}[A\times P]   $ can be viewed as 
\begin{align}\label{decomC} 
C\cong   \left(\prod_{j=1}^{r_{\I}} C_j  \right)\times \left(\prod_{l=1}^{r_{\II}} \left( D_l\times D_l^\prime\right) \right), \end{align}
where $C_j$, $D_s$ and $D_s^\prime$ are   abelian  codes in        $\mathbb{K}_j[P]$, $\mathbb{L}_l[P]$ and $\mathbb{L}_l[P]$, respectively,  for all    $j=1,2,\dots,r_{\I}$ and  $l=1,2,\dots,r_{\II}$.

From   \cite[Section II.D]{JLLX2012},  the Euclidean dual of $C$  in (\ref{decomC}) is of the 
form 

\begin{align} \label{eq-Edual}
C^{\perp_{\rm E}}\cong    \left(\prod_{j=1}^{r_{\I}} C_j ^{\perp_{\rm H}} \right)\times \left(\prod_{l=1}^{r_{\II}} \left( (D_l^\prime) ^{\perp_{\rm E}}\times  D_l^{\perp_{\rm E}}\right) \right), 
\end{align}
where $\perp_{\rm H}:=\perp_{\rm E}$ if $\mathbb{K}_j\cong  \mathbb{F}_{p^k}$.

The characterization of  a Euclidean complementary  dual abelian code in  $\mathbb{F}_{p^{\nu}}[G]   $  is given in the following proposition. 

\begin{proposition} \label{charLCDE} Let $p$ be a prime and let $\nu$  be a positive  integer. Let  $A$ be  finite abelian group such that $p\nmid |A|$ and let  $P$ be a finite abelian $p$-group. Then an abelian code $C$ in $\mathbb{F}_{p^{\nu}}[A\times P]  $ decomposed as in  \eqref{decomC} is Euclidean complementary dual if and only if the following statements hold.
    \begin{enumerate}
        \item  $C_j$ is Hermitian  complementary dual   for all $1\leq j\leq r_\I$.
        \item  $D_l\cap (D_l^\prime)^{\perp_{\rm E}}=\{0\}$ and    $D_l^\prime \cap D_l^{\perp_{\rm E}}=\{0\}$  for all $1\leq l\leq r_{\II}$.
    \end{enumerate}
\end{proposition}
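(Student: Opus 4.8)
The plan is to exploit the product decomposition \eqref{eqSemiSim} of $\mathbb{F}_{p^{\nu}}[A\times P]$, under which an abelian code $C$ splits as in \eqref{decomC} and its Euclidean dual splits as in \eqref{eq-Edual}. Since intersection of ideals respects the direct product structure, we have
\begin{align*}
C\cap C^{\perp_{\rm E}}\cong \left(\prod_{j=1}^{r_{\I}} \bigl(C_j\cap C_j^{\perp_{\rm H}}\bigr)\right)\times\left(\prod_{l=1}^{r_{\II}} \bigl((D_l\cap (D_l^\prime)^{\perp_{\rm E}})\times (D_l^\prime\cap D_l^{\perp_{\rm E}})\bigr)\right),
\end{align*}
where $\perp_{\rm H}=\perp_{\rm E}$ for those components with $\mathbb{K}_j\cong\mathbb{F}_{p^k}$ a prime field. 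Hence $C\cap C^{\perp_{\rm E}}=\{0\}$ if and only if every factor on the right-hand side is $\{0\}$, which is precisely the conjunction of conditions (1) and (2).

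First I would recall that in a finite direct product of rings $R=\prod_i R_i$, every ideal is of the form $\prod_i I_i$ with $I_i$ an ideal of $R_i$, and $\prod_i I_i\cap \prod_i J_i=\prod_i(I_i\cap J_i)$; this is the algebraic fact that lets the ``complementary dual'' condition be checked componentwise. Then I would substitute the explicit forms \eqref{decomC} and \eqref{eq-Edual} into $C\cap C^{\perp_{\rm E}}$, matching the type-$\I$ blocks $\mathbb{K}_j[P]$ against the factors $C_j$ and $C_j^{\perp_{\rm H}}$, and matching each type-$\II$ pair of blocks $\mathbb{L}_l[P]\times\mathbb{L}_l[P]$ against $(D_l\times D_l^\prime)$ and $((D_l^\prime)^{\perp_{\rm E}}\times D_l^{\perp_{\rm E}})$. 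Reading off the zero condition factor by factor yields the stated equivalences: $C_j\cap C_j^{\perp_{\rm H}}=\{0\}$ (i.e.\ $C_j$ Hermitian complementary dual) for each $j$, and $D_l\cap(D_l^\prime)^{\perp_{\rm E}}=\{0\}$ together with $D_l^\prime\cap D_l^{\perp_{\rm E}}=\{0\}$ for each $l$. The only subtlety in the type-$\I$ case is that the relevant inner product on $\mathbb{K}_j[P]$ induced by the Euclidean product on $\mathbb{F}_{p^\nu}[A\times P]$ is the Hermitian one when $s_j>1$ and the Euclidean one when $s_j=1$; but this is already packaged into the convention ``$\perp_{\rm H}:=\perp_{\rm E}$ if $\mathbb{K}_j\cong\mathbb{F}_{p^k}$'' recorded after \eqref{eq-Edual}, so no separate argument is needed.

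The main obstacle, such as it is, is not a genuine difficulty but rather bookkeeping: one must be careful that the isomorphism \eqref{eqSemiSim} is an isometry in the appropriate sense, i.e.\ that the Euclidean form on $\mathbb{F}_{p^\nu}[A\times P]$ genuinely transports to the block-diagonal form whose $\I$-blocks carry the Hermitian form and whose $\II$-blocks pair $D_l$ with $D_l^\prime$ via the Euclidean form. This is exactly the content imported from \cite[Section II.D]{JLLX2012} and encoded in \eqref{eq-Edual}, so I would simply cite it rather than rederive it. Once \eqref{eq-Edual} is granted, the proof is a one-line componentwise comparison, and both directions (necessity and sufficiency) follow simultaneously from the equivalence ``a product is zero iff each factor is zero.'' I would close by noting that conditions (1) and (2) are independent of the Sylow $p$-subgroup $P$ only in the sense to be addressed in the subsequent subsection; at this stage $P$ still appears, since each $C_j$, $D_l$, $D_l^\prime$ lives in a group algebra over $P$.
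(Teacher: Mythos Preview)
Your proposal is correct and follows essentially the same approach as the paper, which simply states that the result ``can be deduced directly from \eqref{decomC} and \eqref{eq-Edual}.'' You have merely unpacked this one-line proof by making explicit the componentwise intersection $C\cap C^{\perp_{\rm E}}$ in the product decomposition, which is exactly the intended argument.
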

\begin{proof}
    The result can be deduced directly from \eqref{decomC} and  \eqref{eq-Edual}.
\end{proof}

\begin{corollary} \label{corCharLCDE}
    Let $p$ be a prime and let $\nu$  be a positive  integer. Let  $A$ be  finite abelian group such that $p\nmid |A|$ and let  $P$ be a finite abelian $p$-group. Then an abelian code $C$ in $\mathbb{F}_{p^{\nu}}[A\times P]  $ decomposed as in  \eqref{decomC} is Euclidean complementary dual if and only if the following statements hold.
    \begin{enumerate}
        \item  $C_j\in \{\{0\}, \mathbb{K}_j[P]\}$      for all $1\leq j\leq r_\I$.
        \item  $(D_l, D_l^\prime ) \in \{ (\{0\}, \mathbb{L}_j[P]), (\mathbb{L}_j[P],\{0\})\}$  for all $1\leq l\leq r_{\II}$.
    \end{enumerate}
\end{corollary}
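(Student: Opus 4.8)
The plan is to derive Corollary \ref{corCharLCDE} from Proposition \ref{charLCDE} by translating each of the two conditions there into the explicit forms stated here, using the structural results about local group algebras from Section 3. First I would handle condition (1). By Proposition \ref{charLCDE}(1), the component $C_j$ must be Hermitian complementary dual in $\mathbb{K}_j[P]$; here $\mathbb{K}_j \cong \mathbb{F}_{p^{\nu s_j}}[P]$ is a local group algebra over a finite field of characteristic $p$, since $P$ is a $p$-group. When $s_j$ is even, so that $\mathbb{K}_j \cong \mathbb{F}_{p^{2\nu'}}[P]$ for the appropriate $\nu'$, Corollary \ref{Hcomplementary} gives immediately that the only Hermitian complementary dual abelian codes are $\{0\}$ and $\mathbb{K}_j[P]$. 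When $s_j$ is odd the inner product $\perp_{\rm H}$ degenerates to $\perp_{\rm E}$ (this is exactly the convention recorded after \eqref{eq-Edual}), and then Theorem \ref{complementary} applies to give the same conclusion $C_j \in \{\{0\}, \mathbb{K}_j[P]\}$. Either way, condition (1) of Proposition \ref{charLCDE} is equivalent to condition (1) of the corollary.

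Next I would handle condition (2). Proposition \ref{charLCDE}(2) requires $D_l \cap (D_l')^{\perp_{\rm E}} = \{0\}$ and $D_l' \cap D_l^{\perp_{\rm E}} = \{0\}$ in $\mathbb{L}_l[P]$, again a local group algebra with maximal ideal $\Delta := \Delta_{\mathbb{L}_l}(P)$ by Proposition \ref{propLocal}(1). The key observation is that the ideals of $\mathbb{L}_l[P]$ are not linearly ordered in general, but every proper nonzero ideal is contained in $\Delta$, and $\{0\} \ne \Delta^{\perp_{\rm E}} \subseteq \Delta$ (the latter inclusion because a code and its dual both sitting inside a proper ideal would force a nonzero intersection argument as in the proof of Theorem \ref{complementary}). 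I would argue: if $D_l' \ne \{0\}$ then $D_l' \supseteq$ some nonzero ideal, so $(D_l')^{\perp_{\rm E}} \subseteq \mathbb{L}_l[P]$ is proper, hence $(D_l')^{\perp_{\rm E}} \subseteq \Delta$; combined with $D_l \cap (D_l')^{\perp_{\rm E}} = \{0\}$ and a dimension count ($\dim D_l + \dim (D_l')^{\perp_{\rm E}} \geq \dim \mathbb{L}_l[P]$ when the intersection is trivial, using that the Euclidean form on a group algebra is non-degenerate so $\dim D_l' + \dim (D_l')^{\perp_{\rm E}} = \dim \mathbb{L}_l[P]$), one is pushed toward $D_l' = \mathbb{L}_l[P]$ forcing $D_l = \{0\}$, and symmetrically. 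The clean way to package this: trivial intersection plus the dimension identities $\dim D_l + \dim D_l^{\perp_{\rm E}} = \dim \mathbb{L}_l[P] = \dim D_l' + \dim (D_l')^{\perp_{\rm E}}$ gives $\dim D_l + \dim D_l' \leq \dim \mathbb{L}_l[P]$ and $\dim D_l' + \dim D_l \leq \dim \mathbb{L}_l[P]$ is not quite enough on its own, so I would instead observe that $D_l \cap (D_l')^{\perp_{\rm E}} = \{0\}$ together with $D_l' \cap D_l^{\perp_{\rm E}} = \{0\}$ forces $D_l' = D_l^{\perp_{\rm E}}$ complement-wise: more precisely $D_l \oplus (D_l')^{\perp_{\rm E}} = \mathbb{L}_l[P]$ by dimensions, whence $(D_l')^{\perp_{\rm E}} = D_l^{\perp_{\rm E}}$ cannot hold unless... — at which point the locality of $\mathbb{L}_l[P]$ (only $\{0\}$ and the whole ring are direct summands, Theorem \ref{derectsummand}) pins down $(D_l, D_l') \in \{(\{0\}, \mathbb{L}_l[P]), (\mathbb{L}_l[P], \{0\})\}$.

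The main obstacle I anticipate is precisely this last deduction: showing that the pair of trivial-intersection conditions on $D_l, D_l'$ in two different "crossed" pairings forces one of them to be the whole ring and the other zero, rather than some intermediate pair of ideals. The cleanest route is to note that $D_l \cap (D_l')^{\perp_{\rm E}} = \{0\}$ forces, by the non-degeneracy of $\langle\cdot,\cdot\rangle_{\rm E}$, that $D_l + D_l' = \mathbb{L}_l[P]$ (take Euclidean duals: $(D_l \cap (D_l')^{\perp_{\rm E}})^{\perp_{\rm E}} = D_l^{\perp_{\rm E}} + D_l' = \mathbb{L}_l[P]$, and the symmetric condition gives $D_l + (D_l')^{\perp_{\rm E}} = \mathbb{L}_l[P]$ as well); combining $D_l + D_l' = \mathbb{L}_l[P]$ with either crossed triviality and a dimension count yields $D_l \cap D_l' = \{0\}$, so $D_l$ is a direct summand of $\mathbb{L}_l[P]$, and Theorem \ref{derectsummand} finishes. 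The remaining work is purely bookkeeping: counting dimensions and checking that the converse direction (these explicit pairs do give Euclidean complementary dual codes) follows by plugging back into Proposition \ref{charLCDE}, which is immediate.
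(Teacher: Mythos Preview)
Your strategy is exactly the paper's: invoke Proposition~\ref{charLCDE}, then use Corollary~\ref{Hcomplementary} (or Theorem~\ref{complementary} when $\perp_{\rm H}$ collapses to $\perp_{\rm E}$) for the type-$\I$ components, and Theorem~\ref{derectsummand} for the type-$\II$ pairs. The paper simply asserts that Proposition~\ref{charLCDE}(2) makes $D_l$ a direct summand; your attempt to justify this step is on the right track but contains a slip. Taking duals you correctly obtain $D_l^{\perp_{\rm E}} + D_l' = \mathbb{L}_l[P]$ and $D_l + (D_l')^{\perp_{\rm E}} = \mathbb{L}_l[P]$, but neither of these yields $D_l + D_l' = \mathbb{L}_l[P]$, and you do not need it: the second equality together with the hypothesis $D_l \cap (D_l')^{\perp_{\rm E}} = \{0\}$ already exhibits the ideal $(D_l')^{\perp_{\rm E}}$ as a complement of $D_l$, so Theorem~\ref{derectsummand} applies directly. (Your earlier aside, that dimension counting gives $D_l \oplus (D_l')^{\perp_{\rm E}} = \mathbb{L}_l[P]$, is in fact enough on its own for the same reason.)

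Note also that the pair set in the stated conclusion, which you reproduce, is a typo in the paper. Once $D_l \in \{\{0\}, \mathbb{L}_l[P]\}$, substituting back into the two conditions of Proposition~\ref{charLCDE}(2) forces $D_l' = D_l$, not its opposite: the admissible pairs are $(\{0\},\{0\})$ and $(\mathbb{L}_l[P],\mathbb{L}_l[P])$. This does not change the enumeration in Corollary~\ref{corNumLCDE}, but it means your ``converse is immediate'' remark deserves a second look rather than being taken for granted.
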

\begin{proof}
    From Proposition \ref{charLCDE},   $C_j$ is Hermitian  complementary dual   for all $1\leq j\leq r_\I$ and 
    $D_l$ is a direct summand  for all $1\leq l\leq r_{\II}$. By Corollary \ref{Hcomplementary},  $C_j\in \{\{0\}, \mathbb{K}_j[P]\}$      for all $1\leq j\leq r_\I$.  By Theorem \ref{derectsummand}, $D_l \in \{ \{0\}, \mathbb{L}_l[P] \}$  for all $1\leq l\leq r_{\II}$, and hence, the result follows. 
\end{proof}

From Corollary \ref{corCharLCDE}, it is not difficult to see that  the number of Euclidean complementary dual abelian codes in $\mathbb{F}_{p^{\nu}}[A\times P]  $ is  independent of  $P$ and it is determined in the following corollary. 
\begin{corollary}\label{corNumLCDE}
    Let $p$ be a prime and let $\nu$  be  a positive integer. Let  $A$ be  finite abelian group such that $p\nmid |A|$ and let  $P$ be a finite abelian $p$-group.  If  $\mathbb{F}_{p^{\nu}}[A\times P]  $ decomposed as in  \eqref{eqSemiSim}, then the number of Euclidean complementary dual abelian codes in $\mathbb{F}_{p^{\nu}}[A\times P]  $ is  \[2^{r_{\I}+r_{\II}}\]
\end{corollary}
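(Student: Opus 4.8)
The plan is to count directly using the characterization from Corollary \ref{corCharLCDE}, which reduces the problem to an independent choice in each of the $r_{\I}+r_{\II}$ blocks of the decomposition \eqref{eqSemiSim}. Since an abelian code $C$ in $\mathbb{F}_{p^{\nu}}[A\times P]$ is Euclidean complementary dual if and only if the component data $(C_1,\dots,C_{r_{\I}},(D_1,D_1'),\dots,(D_{r_{\II}},D_{r_{\II}}'))$ satisfies the stated conditions, and since distinct tuples of components correspond to distinct codes under the isomorphism \eqref{eqSemiSim}, the total count is the product of the number of admissible choices in each block.

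First I would handle the type $\I$ blocks: by condition (1) of Corollary \ref{corCharLCDE}, for each $j$ with $1\le j\le r_{\I}$, the code $C_j$ must lie in $\{\{0\},\mathbb{K}_j[P]\}$, giving exactly $2$ choices per block, hence $2^{r_{\I}}$ choices in total over all type $\I$ blocks. Next I would handle the type $\II$ blocks: by condition (2), for each $l$ with $1\le l\le r_{\II}$, the pair $(D_l,D_l')$ must be one of $(\{0\},\mathbb{L}_l[P])$ or $(\mathbb{L}_l[P],\{0\})$, again exactly $2$ choices per block, hence $2^{r_{\II}}$ choices over all type $\II$ blocks. Multiplying, the number of Euclidean complementary dual abelian codes is $2^{r_{\I}}\cdot 2^{r_{\II}}=2^{r_{\I}+r_{\II}}$.

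Finally I would remark on independence of $P$: the only place $P$ enters the count is through the component rings $\mathbb{K}_j[P]$ and $\mathbb{L}_l[P]$, but the admissible choices in each block are always precisely the two trivial ideals $\{0\}$ and the whole ring, so the number $2^{r_{\I}+r_{\II}}$ depends only on the cyclotomic-class structure of $A$ (equivalently on $r_{\I}$ and $r_{\II}$) and not on $P$. I do not anticipate a genuine obstacle here: the substantive work — the characterization of complementary dual and direct summand ideals in a local group algebra — has already been carried out in Section 3 and packaged into Corollary \ref{corCharLCDE}, so this proof is a short bookkeeping argument. The only point requiring a line of care is to note that the correspondence between codes and component tuples given by \eqref{decomC} is a bijection, so that counting tuples genuinely counts codes.
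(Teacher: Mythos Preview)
Your proposal is correct and follows essentially the same approach as the paper's own proof: apply Corollary~\ref{corCharLCDE}, count $2$ choices in each of the $r_{\I}$ type~$\I$ blocks and $2$ choices in each of the $r_{\II}$ type~$\II$ blocks, and multiply. Your additional remarks on the bijectivity of the decomposition and the independence from $P$ are sound and make explicit what the paper leaves implicit.
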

\begin{proof}
    From the characterization in Corollary \ref{corCharLCDE},  the number of choices of      $C_j$   is    $2^{r_\I}$ and  the number of choices of 
    $(D_l, D_l^\prime )  $  is  $2^{r_{\II}}$. Hence, the  number of Euclidean complementary dual abelian codes in $\mathbb{F}_{p^{\nu}}[A\times P]  $ is $2^{r_{\I}+r_{\II}}$ as desired. 
\end{proof}

\subsection{Enumerations}
As discussed in Corollary \ref{corNumLCDE},  the number of Euclidean complementary dual abelian codes in $\mathbb{F}_{p^{\nu}}[A\times  P]  $ is  independent of $P$. Here, it is  sufficient  to  determined the number  ${r_{\I}+r_{\II}}$ from the group algebra  $\mathbb{F}_{p^{\nu}}[A]$.

Let $\chi$   be a function defined by 
\begin{align} \label{r12}
\chi(d,p^\nu)=
\begin{cases}
1 &\text{if there exists a positive integer } i \text{ such that }  d|(p^{\nu i}+1),\\
0 &\text{otherwise} .\\		
\end{cases}
\end{align}

The  following lemma is a straightforward generalization  of  the case where $p=2$  (see {\cite[Lemma 4.5]{JLLX2012}}). 

\begin{lemma}\label{propType} Let $p$ be  a prime and let $\nu$ be a positive integer.  Let $A$ be a finite  abelian group    such that $p\nmid |A|$  and let  $a\in A$. Then  $S_{p^\nu}(a) $ is of type $\I$  if and only if $\chi({\rm ord}(a) ,p^\nu)=1$. 
\end{lemma}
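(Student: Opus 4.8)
The plan is to unwind the definition of type $\I$ in terms of the $p^\nu$-cyclotomic class structure and translate it into a purely number-theoretic condition on ${\rm ord}(a)$. First I would recall that $S_{p^\nu}(a) = \{p^{\nu i}\cdot a \mid 0 \le i < {\rm ord}_{{\rm ord}(a)}(p^\nu)\}$, so that, writing $n = {\rm ord}(a)$, the class $S_{p^\nu}(a)$ is of type $\I$ precisely when $-a \in S_{p^\nu}(a)$, i.e. when there exists a nonnegative integer $i$ with $p^{\nu i} \cdot a = -a$ in $A$. Since $a$ has additive order $n$, this is equivalent to the congruence $p^{\nu i} \equiv -1 \pmod{n}$ having a solution $i$.

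Next I would establish the two directions. For the ``if'' direction, suppose $\chi(n,p^\nu) = 1$, so there is a positive integer $i$ with $n \mid (p^{\nu i} + 1)$, i.e. $p^{\nu i} \equiv -1 \pmod n$; then $p^{\nu i}\cdot a = -a$, hence $-a \in S_{p^\nu}(a)$ and thus $S_{p^\nu}(-a) = S_{p^\nu}(a)$ (using that the cyclotomic classes partition $A$, so sharing one element forces equality), i.e. $S_{p^\nu}(a)$ is of type $\I$. For the ``only if'' direction, suppose $S_{p^\nu}(a)$ is of type $\I$, so $-a \in S_{p^\nu}(a)$, meaning $p^{\nu i}\cdot a = -a$ for some $i$ with $0 \le i < {\rm ord}_n(p^\nu)$; reading this modulo $n = {\rm ord}(a)$ gives $p^{\nu i} + 1 \equiv 0 \pmod n$, and it only remains to note that $i$ can be taken positive — if $i = 0$ were forced we would have $2a = 0$, so $n \in \{1,2\}$, and in either case $n \mid p^\nu + 1$ trivially (for $n=1$) or because $p$ is odd would fail; here the $p=2$ subtlety of the original \cite[Lemma 4.5]{JLLX2012} is absorbed by observing $\chi(1,p^\nu)=1$ always and $\chi(2,p^\nu)=1$ iff $p^\nu$ is odd, which matches whether $2a=0$ makes $-a=a$. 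Hence $\chi(n,p^\nu) = 1$.

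The main obstacle I anticipate is handling the boundary case $i = 0$ cleanly and making sure the equivalence ``$-a \in S_{p^\nu}(a)$ iff $\exists\, i>0 : p^{\nu i} \equiv -1 \pmod{{\rm ord}(a)}$'' holds with the ``positive integer'' quantifier in \eqref{r12} rather than ``nonnegative''. This is exactly the point where the generalization from $p=2$ needs care: when ${\rm ord}(a) \le 2$ one has $-a = a$ automatically, so the class is type $\I$, and one must check $\chi({\rm ord}(a),p^\nu)=1$ directly in those small cases (it is, since $d\mid(p^{\nu}+1)$ for $d=1$ and, for $d=2$, since ${\rm ord}(a)=2$ with $p\nmid|A|$ forces $p$ odd so $2\mid p^\nu+1$). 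Everything else is a routine transcription of the cyclotomic-class definition into modular arithmetic, so once the $i=0$ case is dispatched the proof closes immediately.
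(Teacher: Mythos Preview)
Your proposal is correct and is precisely the natural direct argument: the paper does not actually supply a proof here, it merely remarks that the lemma is a straightforward generalization of \cite[Lemma 4.5]{JLLX2012}, and your unwinding of ``type $\I$'' into the congruence $p^{\nu i}\equiv -1 \pmod{{\rm ord}(a)}$ is exactly that generalization written out. Your treatment of the $i=0$ boundary case is the only nontrivial point, and you handle it correctly (when ${\rm ord}(a)\in\{1,2\}$ the hypothesis $p\nmid |A|$ forces $p$ odd in the second case, so $\chi({\rm ord}(a),p^\nu)=1$ via $i=1$); the phrasing around ``would fail'' is a bit garbled but the content is right.
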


The value ${r_{\I}+r_{\II}}$ of the group algebra $\mathbb{F}_{p^{\nu}}[A]$  is determined in the following proposition. 
\begin{proposition} \label{propEnum}
    Let $p$ be a prime and let $\nu$ be a positive integer. Let $A$ be  a finite abelian group of exponent $N$. If $p\nmid N$, then 
    \[{r_{\I}+r_{\II}}=\sum_{d|N}\chi(d,p^\nu)\frac{\mathcal{N}_A(d) }{ {\rm ord}_d(p^\nu)} +\frac{1}{2} \sum_{d|N}\left(1-\chi(d,p^\nu)\right)\frac{\mathcal{N}_A(d) }{ {\rm ord}_d(p^\nu)} ,\]
    where  $\mathcal{N}_A(d)$ denotes the number of elements of order $d$ in $A$ determined in \cite{B1997}.
\end{proposition}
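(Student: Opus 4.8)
The plan is to count the representatives $a_1,\dots,a_t$ of $p^\nu$-cyclotomic classes of $A$ according to their type, and then observe that $r_{\I}+r_{\II}$ is exactly the number of type-$\I$ classes plus the number of \emph{unordered pairs} of type-$\II$ classes. First I would partition the $t$ representatives by the order $d$ of the representative element, noting that ${\rm ord}(a)$ is constant on a cyclotomic class $S_{p^\nu}(a)$ (since multiplying by a power of $p^\nu$, which is coprime to $d$, preserves the order). For a fixed divisor $d$ of $N$, the elements of order $d$ in $A$ number $\mathcal{N}_A(d)$, and each $p^\nu$-cyclotomic class of elements of order $d$ has size exactly ${\rm ord}_d(p^\nu)$ by the description of $S_{p^\nu}(a)$ recalled in Section~2; hence the number of such classes is $\mathcal{N}_A(d)/{\rm ord}_d(p^\nu)$.

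Next I would invoke Lemma~\ref{propType}: a cyclotomic class $S_{p^\nu}(a)$ with ${\rm ord}(a)=d$ is of type $\I$ precisely when $\chi(d,p^\nu)=1$, and this depends only on $d$, not on the particular class. So all $\mathcal{N}_A(d)/{\rm ord}_d(p^\nu)$ classes of elements of order $d$ have the same type. If $\chi(d,p^\nu)=1$, every such class is of type $\I$ and contributes $1$ each to $r_{\I}$, giving the term $\chi(d,p^\nu)\,\mathcal{N}_A(d)/{\rm ord}_d(p^\nu)$ toward $r_{\I}+r_{\II}$. If $\chi(d,p^\nu)=0$, every such class is of type $\II$; since type-$\II$ classes come in pairs $\{S_{p^\nu}(a),S_{p^\nu}(-a)\}$ with $S_{p^\nu}(a)\ne S_{p^\nu}(-a)$ (and both members consist of elements of the same order $d$), these $\mathcal{N}_A(d)/{\rm ord}_d(p^\nu)$ classes group into exactly half as many pairs, contributing $\tfrac12(1-\chi(d,p^\nu))\,\mathcal{N}_A(d)/{\rm ord}_d(p^\nu)$ toward $r_{\II}$. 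Summing over all $d\mid N$ (recalling that $d\mid N$ covers exactly the orders occurring in $A$, since $N$ is the exponent, and that $p\nmid N$ guarantees ${\rm ord}_d(p^\nu)$ is defined) yields the stated formula.

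The one technical point requiring care is that when $\chi(d,p^\nu)=0$ the count $\mathcal{N}_A(d)/{\rm ord}_d(p^\nu)$ is automatically even, so that dividing by $2$ gives an integer; this follows because the involution $a\mapsto -a$ on the set of elements of order $d$ commutes with the $p^\nu$-cyclotomic action and, by Lemma~\ref{propType}, has no fixed classes in this case, hence acts freely on the set of classes of order $d$. I would also note that $d=1$ (the identity element) always satisfies $\chi(1,p^\nu)=1$ and contributes the single class $\{0\}$ of type $\I$, consistent with the formula. The main obstacle—such as it is—is not any deep argument but rather bookkeeping: making sure the pairing of type-$\II$ classes is correctly matched to the chosen indexing $\{a_{r_\I+l},\,a_{r_\I+r_\II+l}=-a_{r_\I+l}\}$ from the setup preceding \eqref{eqSemiSim}, and confirming that elements of order $d$ produce classes entirely within the "order-$d$" stratum so the sum over $d$ genuinely partitions all $t$ classes. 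Once that is in place, the identity $t=r_\I+2r_\II$ provides a useful consistency check on the computation.
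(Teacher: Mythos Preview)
Your proposal is correct and follows essentially the same route as the paper's proof: partition the $p^\nu$-cyclotomic classes by the order $d$ of their elements, observe that each stratum contains $\mathcal{N}_A(d)/{\rm ord}_d(p^\nu)$ classes, apply Lemma~\ref{propType} to declare the entire stratum type~$\I$ or type~$\II$ according to $\chi(d,p^\nu)$, and then sum. The paper in fact records the separate formulas $r_\I=\sum_{d\mid N}\chi(d,p^\nu)\,\mathcal{N}_A(d)/{\rm ord}_d(p^\nu)$ and $r_\II=\tfrac12\sum_{d\mid N}(1-\chi(d,p^\nu))\,\mathcal{N}_A(d)/{\rm ord}_d(p^\nu)$, exactly as your argument yields; your extra remarks on the freeness of $a\mapsto -a$ and the $d=1$ case are careful elaborations the paper leaves implicit.
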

\begin{proof}
    Using the arguments similar to those in \cite[Remark 2.5]{JLS2014},  for each $d|N$, the elements of order $d$ in $A$  are partitioned into $p^{\nu}$-cyclotomic classes of the same size $ \frac{\mathcal{N}_A(d) }{ {\rm ord}_d(p^\nu)} $.  By  Lemma \ref{propType}, it follows that  \[{r_{\I}}=\sum_{d|N}\chi(d,p^\nu)\frac{\mathcal{N}_A(d) }{ {\rm ord}_d(p^\nu)} \text{  and } r_{\II} =\frac{1}{2} \sum_{d|N}\left(1-\chi(d,p^\nu)\right)\frac{\mathcal{N}_A(d) }{ {\rm ord}_d(p^\nu)} .\]
    Hence, 
    the result follows.
\end{proof}

In the case where $A$ is a cyclic group of order $n$ with $p\nmid n$,  it follows  that the exponent of $A$ is $n$ and $\mathcal{N}_A(d)=\Phi(d)$ for all divisors $d$ of $n$, where  $\Phi(d)$ denotes the  Euler totient phi function. Hence,  the following corollary can be deduced. 
\begin{corollary}
    Let $p$ be a prime and let $\nu$ be a positive integer. Let $n$ be a positive integer such that  $p\nmid n$. Then  the number of Euclidean complementary dual cyclic codes of length $np^k$ over $\mathbb{F}_{p^\nu}$ equals  the number of Euclidean complementary dual cyclic codes of length $n$ over $\mathbb{F}_{p^\nu}$ for all   integers $k\geq 0$ which is 
    \[2^{\sum_{d|n}\chi(d,p^\nu)\frac{\Phi(d) }{ {\rm ord}_d(p^\nu)} +\frac{1}{2} \sum_{d|n}\left(1-\chi(d,p^\nu)\right)\frac{\Phi(d) }{ {\rm ord}_d(p^\nu)} }.\]
\end{corollary}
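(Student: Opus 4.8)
The plan is to derive the corollary as a direct specialization of Proposition~\ref{propEnum} combined with the structural results of Section~4. First I would recall that a cyclic code of length $np^k$ over $\mathbb{F}_{p^\nu}$ is precisely an ideal in $\mathbb{F}_{p^\nu}[\mathbb{Z}_{np^k}]$, and that $\mathbb{Z}_{np^k}\cong \mathbb{Z}_n\times\mathbb{Z}_{p^k}$ since $\gcd(n,p^k)=1$. Thus, taking $A=\mathbb{Z}_n$ (so that $p\nmid|A|$) and $P=\mathbb{Z}_{p^k}$ (the Sylow $p$-subgroup), we are exactly in the setting $\mathbb{F}_{p^\nu}[A\times P]$ of Corollary~\ref{corNumLCDE}. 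By that corollary, the number of Euclidean complementary dual cyclic codes of length $np^k$ equals $2^{r_{\I}+r_{\II}}$, a quantity computed from $\mathbb{F}_{p^\nu}[A]=\mathbb{F}_{p^\nu}[\mathbb{Z}_n]$ alone and hence independent of $k$; in particular it agrees with the count for $k=0$, i.e.\ for cyclic codes of length $n$.

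The second step is to make the exponent $r_{\I}+r_{\II}$ explicit using Proposition~\ref{propEnum}. Here $A=\mathbb{Z}_n$ is cyclic of order $n$, so its exponent is $N=n$, and for each divisor $d$ of $n$ the number of elements of order $d$ in $\mathbb{Z}_n$ is the Euler totient value, $\mathcal{N}_A(d)=\Phi(d)$. Substituting $N=n$ and $\mathcal{N}_A(d)=\Phi(d)$ into the formula of Proposition~\ref{propEnum} yields
\[
r_{\I}+r_{\II}=\sum_{d|n}\chi(d,p^\nu)\frac{\Phi(d)}{{\rm ord}_d(p^\nu)}+\frac{1}{2}\sum_{d|n}\bigl(1-\chi(d,p^\nu)\bigr)\frac{\Phi(d)}{{\rm ord}_d(p^\nu)},
\]
and raising $2$ to this power gives the claimed count.

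There is essentially no real obstacle here: the corollary is a routine unwinding of already-proved statements, and the only minor point to check is that the hypotheses line up correctly, namely that $p\nmid n$ guarantees both $p\nmid|A|$ and that $\mathbb{Z}_{np^k}$ splits as $A\times P$ with $P$ the full Sylow $p$-subgroup, so that Corollary~\ref{corNumLCDE} and Proposition~\ref{propEnum} apply verbatim. One should also note in passing that the divisors $d$ of $n$ appearing in the sum all satisfy $\gcd(d,p)=1$, so that ${\rm ord}_d(p^\nu)$ and $\chi(d,p^\nu)$ are well defined. Beyond these bookkeeping remarks, the proof is a one-line citation of the preceding results followed by the substitution above.
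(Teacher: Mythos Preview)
Your proposal is correct and matches the paper's approach exactly: the paper simply observes, in the sentence preceding the corollary, that when $A=\mathbb{Z}_n$ the exponent is $n$ and $\mathcal{N}_A(d)=\Phi(d)$, and then the formula follows by substituting into Proposition~\ref{propEnum} together with the $P$-independence from Corollary~\ref{corNumLCDE}. Your write-up is in fact more detailed than the paper's, which offers no proof block at all.
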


From Proposition \ref{propEnum},  for an arbitrary finite abelian group $A$ with $p\nmid |A|$,    we have ${r_{\I}+r_{\II}}\geq 1$.   Hence, there  exist at least two Euclidean complementary dual abelian codes  in  $\mathbb{F}_{p^\nu}[A]$.

    In Table \ref{T1}, the  number  ${r_{\I}+r_{\II}}$  for   group algebras $\mathbb{F}_2[A]$ is given for abelian groups $A$ of odd order less than $50$.
    
    \begin{table}[!hbt]
        \centering
    \begin{tabular}
       {|r|r|r|} \hline Order of $A$ & $A$~~~~~~ & $r_\I+r_\II$~~\\
       \hline  
       $3$ &$\mathbb{Z}_3$ &$2 $\\
       $5$ &$\mathbb{Z}_5 $ &$ 2$\\
       $7$ &$\mathbb{Z}_7 $ &$ 2$\\
       $9$ &$\mathbb{Z}_3\times\mathbb{Z}_3 $ &$ 5$\\
       &$\mathbb{Z}_{3^2} $ &$ 3$\\
       $11$ &$\mathbb{Z}_{11}$ &$2 $\\
       $13$ &$\mathbb{Z}_{13} $ &$2 $\\
       $15$ &$\mathbb{Z}_5\times\mathbb{Z}_3 $ &$ 4$\\
       $17$ &$\mathbb{Z}_{17} $ & $3 $\\
       $19$ &$\mathbb{Z}_{19} $ &$2 $\\
       $21$ &$\mathbb{Z}_7\times\mathbb{Z}_3 $ &$4 $\\
       $23$ &$\mathbb{Z}_{23} $ &$2 $\\
       $25$ &$ \mathbb{Z}_5\times\mathbb{Z}_5 $ &$ 7$\\
       &$\mathbb{Z}_{5^2} $ &$ 3$\\
       $27$ &$\mathbb{Z}_3\times\mathbb{Z}_3\times\mathbb{Z}_3 $ &$14 $\\
       &$\mathbb{Z}_{3^2}\times\mathbb{Z}_3 $ &$8$\\
       &$\mathbb{Z}_{3^3}$ &$ 4$\\
       $29$ &$\mathbb{Z}_{29}$ &$2 $\\
       $31$ &$\mathbb{Z}_{31} $ &$4 $\\
       $33$ &$\mathbb{Z}_{11}\times\mathbb{Z}_3 $ &$ 5$\\
       $35$ &$\mathbb{Z}_{7}\times\mathbb{Z}_5 $ &$ 4$\\
       $37$ &$\mathbb{Z}_{37} $ &$2 $\\
       $39$ &$\mathbb{Z}_{13}\times\mathbb{Z}_3 $ &$4$\\
       $41$ &$\mathbb{Z}_{41} $ &$3 $\\
       $43$ &$\mathbb{Z}_{43}$ &$4 $\\
       $45$&$\mathbb{Z}_5\times\mathbb{Z}_3\times\mathbb{Z}_3$ &$ 10$\\
       &$\mathbb{Z}_5\times\mathbb{Z}_{3^2} $ &$6 $\\
       $47$ &$\mathbb{Z}_{47} $ &$3 $\\
       $49$ &$ \mathbb{Z}_7\times\mathbb{Z}_7$ &$ 9$\\
       &$ \mathbb{Z}_{7^2}$&$ 3$\\
       \hline 
   \end{tabular}
        \caption{The number  $r_\I+r_\II $ for  group algebra  $\mathbb{F}_2[A]$} \label{T1}
    \end{table}

In the following subsections,  a simplified formula  for  \eqref{r12} is  given for some families of  finite abelian $q$-groups, where $q$ is a prime such that $p\ne q$. 
Let $A\cong(\mathbb{Z}_{q^k})^s$, where $k$ and $s$ are  positive integers, and $q$ is  prime such that $\gcd(p,q)=1$. For each $0 \leq i \leq k$, define \[A_{q^i}:=\{a \in A |\: {\rm ord}(a)=q^i\}.\] 
Clearly,  $A_1, A_q, \ldots, A_{q^k}$ are pair-wise disjoint and $A = A_1 \cup A_q \cup \cdots \cup A_{q^k}$. For each $1 \leq i \leq k$, it is not difficult to see that $A_{q^i} = \left(q^{k-i}\mathbb{Z}_{q^k}\right)^s \setminus \left(q^{k-(i-1)}\mathbb{Z}_{q^k}\right)^s$.  Consequently, we have $|A_1|=1$ and
\begin{align*}\label{Eq:Cardinality}
|A_{q^i}| = q^{is}-q^{(i-1)s}
\end{align*}
for all  $i = 1, 2,  \ldots, k$.   Note that $|S_{p^\nu}(a)|= {\rm ord}_{{\rm ord}(a)}(p^\nu) = {\rm ord}_{q^i}(p^\nu) $ for all $a \in A_{q^i}$.

\subsubsection{$A=(\mathbb{Z}_{2^k})^s$ and $p$ is an odd prime}

    Here, we consider the case where $p$ is odd and $q=2$, i.e., $A\cong(\mathbb{Z}_{2^k})^s$.  Clearly,  $A_1=\{0\}$ and $S_{p^\nu}(0)$ is of type $\I$.  For an element   $a \in A_2$, we have  ${\rm ord}(a)=2$  which implies that  $a= - a$. Hence,  $S_{p^\nu}(a)=S_{p^\nu}( -a) $ is of type $\I$ and its cardinality  is  $1$.  
 In general, we have the following characterizations of  the $p^\nu$-cyclotomic classes of type $I$ in $A$.
    
    \begin{lemma} \label{lem2p}
   For each $i\in \{0,1,2,\dots, k\}$, let $a\in A_{2^i}$. Then  one of the following statements holds.
   \begin{enumerate}
 \item   If $i\in \{0,1\}$, then  $S_{p^\nu}(a)$  is of type $I$   and size $1$.  
 \item   
   If $i\geq 2$, then  $S_{p^\nu}(a)$  is of type $I$ if and only if ${\mathrm ord}_{2^i}(p^\nu)=2$.
\end{enumerate}   
    \end{lemma}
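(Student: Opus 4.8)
The plan is to convert the question ``is $S_{p^\nu}(a)$ of type $\I$?'' into a membership statement inside the unit group $(\mathbb{Z}/2^i\mathbb{Z})^\times$ via Lemma~\ref{propType}, to settle the small cases $i\in\{0,1\}$ by hand, and then to run an explicit computation in $(\mathbb{Z}/2^i\mathbb{Z})^\times$ when $i\ge 2$. To begin, for $a\in A_{2^i}$ one has ${\rm ord}(a)=2^i$, so Lemma~\ref{propType} gives that $S_{p^\nu}(a)$ is of type $\I$ if and only if $\chi(2^i,p^\nu)=1$, i.e.\ if and only if $2^i\mid p^{\nu j}+1$ for some positive integer $j$, equivalently if and only if $-1$ belongs to the cyclic subgroup $\langle p^\nu\rangle$ of $(\mathbb{Z}/2^i\mathbb{Z})^\times$. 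For the ``size $1$'' assertion I would also keep in mind the fact stated just before the lemma, namely $|S_{p^\nu}(a)|={\rm ord}_{2^i}(p^\nu)$.

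The cases $i\in\{0,1\}$ are immediate and establish part~(1). If $i=0$, then $a=0$ and $S_{p^\nu}(0)=\{0\}=S_{p^\nu}(-0)$, so this class is of type $\I$ and has size $1$. If $i=1$, then ${\rm ord}(a)=2$, hence $2a=0$ and $-a=a$, so $S_{p^\nu}(a)=S_{p^\nu}(-a)$ is of type $\I$; and since $p$ is odd, $p^\nu$ is odd, so $p^\nu\equiv 1\pmod 2$, ${\rm ord}_2(p^\nu)=1$, and $|S_{p^\nu}(a)|=1$.

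For $i\ge 2$ we have $2^i\nmid 2$, so only $j\ge 1$ can occur and type $\I$ is equivalent to $-1\in\langle p^\nu\rangle\le(\mathbb{Z}/2^i\mathbb{Z})^\times$. When $i=2$, the group $(\mathbb{Z}/4\mathbb{Z})^\times=\{1,-1\}$ has order $2$, and this holds exactly when $p^\nu\equiv 3\pmod 4$, i.e.\ exactly when ${\rm ord}_4(p^\nu)=2$, so the equivalence is transparent. For $i\ge 3$ I would invoke the classical decomposition $(\mathbb{Z}/2^i\mathbb{Z})^\times\cong\langle -1\rangle\times\langle 5\rangle\cong\mathbb{Z}/2\mathbb{Z}\times\mathbb{Z}/2^{i-2}\mathbb{Z}$, write $p^\nu\equiv(-1)^{\varepsilon}5^{m}\pmod{2^i}$, compute the subgroup generated by $(-1)^{\varepsilon}5^{m}$, and decide when $-1=(-1)^1 5^0$ is one of its powers. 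Since every residue in $\langle 5\rangle$ is $\equiv 1\pmod 4$ while $-1\equiv 3\pmod 4$, this forces $\varepsilon=1$ and then $2^{i-2}\mid m$, hence $m=0$, that is, $p^\nu\equiv -1\pmod{2^i}$. Finally I would compare this condition with ${\rm ord}_{2^i}(p^\nu)=2$, which is where the $2$-adic identity $v_2(p^{2\nu}-1)=v_2(p^\nu-1)+v_2(p^\nu+1)$ comes in.

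The step I expect to be the main obstacle is precisely this last comparison for $i\ge 3$, together with the membership computation preceding it. The subtlety is that for $i\ge 3$ the unit group $(\mathbb{Z}/2^i\mathbb{Z})^\times$ has three elements of order $2$ --- namely $-1$, $2^{i-1}+1$ and $2^{i-1}-1$ --- and in general only $-1$ itself is a power of $p^\nu$; hence ``$p^\nu$ is an involution modulo $2^i$'' (that is, ${\rm ord}_{2^i}(p^\nu)=2$) is a priori weaker than $p^\nu\equiv -1\pmod{2^i}$, and the argument must keep careful track, through $v_2(p^\nu-1)$ and $v_2(p^\nu+1)$, of exactly when $-1$ actually lands in $\langle p^\nu\rangle$. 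Pinning down which hypotheses on $p^\nu$ modulo $2^i$ are really in force --- and surfacing any constraint needed beyond ${\rm ord}_{2^i}(p^\nu)=2$ --- is the delicate core of the proof.
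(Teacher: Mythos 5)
Your reduction of ``type $\I$'' to the condition $-1\in\langle p^\nu\rangle\le(\mathbb{Z}/2^i\mathbb{Z})^\times$, and your computation that for $i\ge 3$ this membership forces $p^\nu\equiv -1\pmod{2^i}$, are both correct, and the obstacle you flag at the end is not a defect of your argument: it is a defect of the lemma. For $i\ge 3$ the assertion ``$S_{p^\nu}(a)$ is of type $\I$ if and only if ${\rm ord}_{2^i}(p^\nu)=2$'' fails in the ``if'' direction, for exactly the reason you give: $(\mathbb{Z}/2^i\mathbb{Z})^\times$ contains involutions other than $-1$. Concretely, take $p^\nu=3$ and $i=3$ (so $k\ge 3$ and $a$ of order $8$): then $3^2\equiv 1\pmod 8$ gives ${\rm ord}_{8}(3)=2$, yet $S_3(a)=\{a,3a\}$ does not contain $-a=7a$, so the class is of type $\II$. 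The correct equivalence for $i\ge 2$ is ``$S_{p^\nu}(a)$ is of type $\I$ if and only if $2^i\mid(p^\nu+1)$,'' which is what both your $(\mathbb{Z}/2^i\mathbb{Z})^\times\cong\langle-1\rangle\times\langle 5\rangle$ computation and the paper's forward argument (factoring $p^{\nu j}+1=(p^\nu+1)\sum_{m=0}^{j-1}(-1)^mp^{\nu(j-1-m)}$ for odd $j$) actually establish; that condition coincides with ${\rm ord}_{2^i}(p^\nu)=2$ only when $i=2$.

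The paper's own proof of the converse commits precisely the error you anticipated: from ${\rm ord}_{2^i}(p^\nu)=2$ it deduces ``$p^{2\nu}\cdot a=a$ and $p^{\nu}\cdot a\ne a$, which implies $p^{\nu}\cdot a=-a$,'' and this is invalid because $2^i\mid(p^\nu-1)(p^\nu+1)$ together with $2^i\nmid(p^\nu-1)$ does not yield $2^i\mid(p^\nu+1)$ when $i\ge 3$ (again $p^\nu=3$, $i=3$). So you should not try to close your final step; instead restate part (2) with the condition $2^i\mid(p^\nu+1)$, at which point your argument (or the paper's factorization argument for the forward direction, plus the trivial converse) is complete. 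The downstream results survive: Proposition~\ref{Prop:t2} and the resulting formulas for $r_\I$ and $r_\II$ are phrased in terms of the largest $r'$ with $2^{r'}\mid(p^\nu+1)$, which is exactly the corrected criterion, although the step ``$2^i\nmid(p^\nu+1)$ implies ${\rm ord}_{2^i}(p^\nu)>2$'' in the proof of that proposition needs the same repair.
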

    \begin{proof}
  The first statement follows from  the discussion above. 
 To prove the second statement, assume that $i\geq 2$.  Assume that  $S_{p^\nu}(a)$  is of type $I$.  Then $-a\in  S_{p^\nu}(a)$.  
  Then  $p^{\nu j} \cdot a=-a$ for some integer $j\geq 1$.
  Since $i\geq 2$, we have  $4|(p^{\nu j}+1)$. If $j$ is even, then    $p^{\nu j}\equiv 1 \ {\rm mod\ } 4$  which implies that $(p^{\nu j}+1)\equiv 2\, {\rm mod }\, 4$, a contradiction. 
  It follows that   $j$ is odd.  Since $p^{\nu j}+1=(p^{\nu j}+1)\left(\sum\limits_{i=0}^{j-1} (-1)^ip^{\nu (j-1-i)} \right)$ and  $\sum\limits_{i=0}^{j-1} (-1)^ip^{\nu(j-1-i)} $ is odd, we have  $2^i|(p^\nu+1)$. Hence,  $p^\nu \equiv -1 \ {\rm mod\ } 2^i$  which implies that  ${\rm ord}_{2^i}( p^\nu)=2$.

  Conversely, assume that ${\mathrm ord}_{2^i}(p^\nu)=2$.  Then $p^{2\nu}\cdot a=a$ and $p^{\nu}\cdot a\ne a$   which implies that  $p^{\nu}\cdot a =- a$. Hence,  $S_{p^\nu}(a)$
   is a $p^\nu$-cyclotomic class of type $I$. 
    \end{proof}

    \begin{proposition}\label{Prop:t2} Let $0\leq r^\prime \leq k$ be the largest integer such that $2^{r^\prime}|(p^\nu+1)$ and let $a\in A$.  Then $S_{p^\nu}(a)$ is of type $I$ if and only if $a\in A_{2^i}$ for some $0\leq i\leq r^\prime$.  Equivalently,  $a\in A$ is of type $\II$ if and only if $a\in A_{2^i}$ for some $i> r^\prime$.
    \end{proposition}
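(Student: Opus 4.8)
The plan is to reduce the statement to a computation of the $2$-adic valuation of $p^\nu+1$. The tool I would use is Lemma~\ref{propType}, which says that $S_{p^\nu}(a)$ is of type $\I$ precisely when $\chi({\rm ord}(a),p^\nu)=1$; combined with the definition of $r^\prime$ this should give the result almost at once, once one knows that the auxiliary condition $\chi(2^i,p^\nu)=1$ depends only on whether $2^i\mid(p^\nu+1)$.

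Concretely, fix $a\in A$ and let $i\in\{0,1,\dots,k\}$ be the unique index with $a\in A_{2^i}$, so ${\rm ord}(a)=2^i$. By Lemma~\ref{propType}, $S_{p^\nu}(a)$ is of type $\I$ if and only if $\chi(2^i,p^\nu)=1$, that is, if and only if there is a positive integer $j$ with $2^i\mid(p^{\nu j}+1)$. The heart of the argument is the claim that, for every $i$ with $0\le i\le k$,
\[
\chi(2^i,p^\nu)=1\quad\Longleftrightarrow\quad 2^i\mid(p^\nu+1).
\]
The direction ``$\Leftarrow$'' is trivial (take $j=1$). For ``$\Rightarrow$'', suppose $2^i\mid(p^{\nu j}+1)$ for some $j\ge1$. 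If $i\le1$ there is nothing to prove, since $p$ odd forces $2\mid(p^\nu+1)$. If $i\ge2$, I would first note that $j$ must be odd: if $j$ were even then $p^{\nu j}$ would be the square of an odd number, hence $p^{\nu j}\equiv1\pmod4$, so $p^{\nu j}+1\equiv2\pmod4$, contradicting $4\mid(p^{\nu j}+1)$. For odd $j$ one has the factorization $p^{\nu j}+1=(p^\nu+1)\bigl(\sum_{l=0}^{j-1}(-1)^l p^{\nu(j-1-l)}\bigr)$, whose second factor is a sum of an odd number of odd terms, hence odd; so $v_2(p^{\nu j}+1)=v_2(p^\nu+1)$ and therefore $2^i\mid(p^\nu+1)$. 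This is essentially the computation already appearing in the ``only if'' part of the proof of Lemma~\ref{lem2p}.

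Granting the claim, $S_{p^\nu}(a)$ is of type $\I$ if and only if $2^i\mid(p^\nu+1)$. Since the set of integers $i$ with $0\le i\le k$ and $2^i\mid(p^\nu+1)$ is exactly $\{0,1,\dots,r^\prime\}$ by the very definition of $r^\prime$ (divisibility by a power of $2$ is downward closed in the exponent), this is equivalent to $0\le i\le r^\prime$, i.e.\ to $a\in A_{2^i}$ for some $0\le i\le r^\prime$. Finally, because $A=\bigcup_{i=0}^{k}A_{2^i}$ is a disjoint union, the negation reads: $S_{p^\nu}(a)$ is of type $\II$ if and only if $a\in A_{2^i}$ for some $i$ with $r^\prime<i\le k$, which is the asserted equivalent form. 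I do not anticipate a real obstacle; the only point needing care is the ``$\Rightarrow$'' direction of the displayed claim, where one must rule out even $j$ and then use that, for odd $j$, the $2$-adic valuation of $p^{\nu j}+1$ does not depend on $j$.
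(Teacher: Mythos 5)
Your proof is correct and follows essentially the same route as the paper's: you channel the reduction through Lemma~\ref{propType} and the function $\chi$, whereas the paper channels it through Lemma~\ref{lem2p} and the condition ${\rm ord}_{2^i}(p^\nu)=2$, but the arithmetic core --- ruling out even $j$ and using the factorization of $p^{\nu j}+1$ for odd $j$ to see that its $2$-adic valuation equals that of $p^\nu+1$ --- is exactly the computation in the paper's proof of Lemma~\ref{lem2p}, as you yourself note. No gaps.
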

    \begin{proof}
     Assume that  $i >  r^\prime$.  Then $2^i\nmid (p^\nu +1)$ which implies that ${\mathrm ord} _{2^i}(p^\nu) >2$. By Lemma \ref{lem2p},  $S_{p^\nu}(a)$ is not of type $I$.

     Conversely,  assume that  $a\in A_{2^i}$ for some $0\leq i\leq r^\prime$.  From Lemma \ref{lem2p}, if $i\in \{0,1\}$, then $S_{p^\nu}
(a)$ is of type $I$.     Assume that $2\leq i \leq r^\prime$. Then $2^i\geq 4$ and  $2^i| (p^\nu +1)$  which implies that   ${\mathrm ord} _{2^i}(p^\nu) =2$. By Lemma \ref{lem2p},  it follows that  $S_{p^\nu}(a)$ is of type $I$.
\end{proof}

    Using Proposition~\ref{Prop:t2},  $A_{2^i}$ contains $p^\nu$-cyclotomic classes of type $\I$ for all $i=0, 1, \ldots, r'$ and the rest of the sets $A_{2^j}$ contain $p^\nu$-cyclotomic classes of type $\II$, for $j=r'+1, \ldots, k$.  Since ${\mathrm ord}_{2^i}(p^\nu)=1$ for all $i\in \{0,1\}$ and  ${\mathrm ord}_{2^i}(p^\nu)=2$ for all  $2 \leq i \leq r'$,  we have \[r_I = \sum_{i=0}^{r'} \frac{|A_{2^{i}}|}{{\mathrm ord}_{2^i}(p^\nu)} =|A_1|+|A_2|+\sum_{i=2}^{r'} \frac{|A_{2^{i}}|}{2} = {2^s +\frac{2^{r's}-2^s}{2} }\] and 
    \[r_{II} = \sum_{i=r'+1}^k\frac{|H_{2^{i}}|}{2{\mathrm ord}_{2^i}(p^\nu)} = \sum_{i=r'+1}^k\frac{2^{is} - 2^{(i-1)s}}{2{\mathrm ord}_{2^i}(p^\nu)}.\]  
    
    Hence,   the number of Euclidean complementary dual abelian codes in $\mathbb{F}_{p^\nu}[( \mathbb{Z}_{2^k})^s]$ follows immediately from Corollary \ref{corNumLCDE}.

\subsubsection{$A=(\mathbb{Z}_{q^k})^s$, where $q$ is an odd prime and $\gcd(q,p)=1$}

In this part,   we focus on the case where $A=(\mathbb{Z}_{q^k})^s$, where $q$ is an odd prime and $\gcd(q,p)=1$. Note that $S_{p^\nu}(0) = \{0\} = A_1$ is of type $\I$.

Next, we characterize the type of   $p^\nu$-cyclotomic classes of elements in $A_q$ in terms  of ${\rm ord}_q(p^\nu)$.

\begin{lemma}
[{\cite[Proposition 4]{M1997}}]\label{ord} Let $q$ be an odd prime and let $i$ be a positive integer. If $p$ is a prime such that $q\ne p$, then ${\rm ord}_{q^i}(p)={\rm ord}_{q}(p)q^j$ for some $j\geq 0$. 
\end{lemma}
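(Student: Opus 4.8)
The plan is to argue by induction on $i$; this is a classical fact about the structure of the multiplicative group modulo a prime power. Throughout I would set $d:={\rm ord}_q(p)$, which is well defined because $q\ne p$ forces $\gcd(p,q)=1$, so $p$ is a unit modulo every power of $q$. The base case $i=1$ is immediate, with $j=0$.

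For the inductive step, suppose ${\rm ord}_{q^i}(p)=d\,q^{j}$ for some $j\ge 0$, and put $e:={\rm ord}_{q^i}(p)$ and $E:={\rm ord}_{q^{i+1}}(p)$. First I would observe that $e\mid E$, since $p^{E}\equiv 1\pmod{q^{i+1}}$ implies $p^{E}\equiv 1\pmod{q^{i}}$. Next, writing $p^{e}=1+q^{i}u$ for some integer $u$ (possible because $q^{i}\mid p^{e}-1$), I would expand
\[
p^{eq}=(1+q^{i}u)^{q}=1+q^{i+1}u+\sum_{k=2}^{q}\binom{q}{k}q^{ik}u^{k}.
\]
Every summand with $k\ge 2$ is divisible by $q^{ik}$ with $ik\ge 2i\ge i+1$, so $p^{eq}\equiv 1\pmod{q^{i+1}}$, giving $E\mid eq$. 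Combining $e\mid E\mid eq$ with $q$ prime forces $E\in\{e,\,eq\}$, hence $E=d\,q^{j}$ or $E=d\,q^{j+1}$; in either case $E$ has the asserted form, which closes the induction.

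A self-contained alternative, which I would mention but not adopt, runs as follows. Since $q$ is odd, $(\mathbb{Z}/q^{i}\mathbb{Z})^{*}$ is cyclic of order $q^{i-1}(q-1)$, so ${\rm ord}_{q^{i}}(p)$ divides $q^{i-1}(q-1)$ and therefore equals $q^{j}m$ with $\gcd(m,q)=1$ and $j\le i-1$. Reduction modulo $q$ gives $d\mid q^{j}m$, hence $d\mid m$ since $\gcd(d,q)=1$; and a lifting-the-exponent computation with the $q$-adic valuation $v_{q}$, namely $v_{q}\!\big(p^{dq^{i-1}}-1\big)=v_{q}(p^{d}-1)+(i-1)\ge i$, yields ${\rm ord}_{q^{i}}(p)\mid dq^{i-1}$, hence $m\mid d$. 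Thus $m=d$ and ${\rm ord}_{q^{i}}(p)=dq^{j}$.

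The only genuinely computational point — and the step I expect to need the most care — is the ``raising by one power of $q$'' estimate: verifying that in $(1+q^{i}u)^{q}$ all binomial terms past the linear one vanish modulo $q^{i+1}$, which is exactly where $i\ge 1$ is used. Everything else is routine manipulation of orders and divisibility.
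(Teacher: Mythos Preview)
Your proof is correct. The induction step is carried out cleanly: the key estimate that every binomial term with $k\ge 2$ in $(1+q^{i}u)^{q}$ is divisible by $q^{i+1}$ follows from $ik\ge 2i\ge i+1$ for $i\ge 1$, and the sandwich $e\mid E\mid eq$ with $q$ prime then forces $E\in\{e,eq\}$. The alternative route via cyclicity of $(\mathbb{Z}/q^{i}\mathbb{Z})^{*}$ and the lifting-the-exponent identity is also sound.

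There is nothing to compare against in the paper itself: the authors quote this lemma from \cite[Proposition~4]{M1997} and do not supply a proof. Your argument is a standard, self-contained verification of that cited fact.
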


\begin{proposition}\label{Prop:TypeI:Zpks} 
    Let $a \in A_q$.  Then  $S_{p^\nu}(a)$ is of type $\I$ if and only if ${\rm ord}_q(p^\nu)$ is even.   Equivalently,   $S_{p^\nu}(a)$ is of type $\II$ if and only if  ${\rm ord}_{q}(p^\nu)$ is odd.
\end{proposition}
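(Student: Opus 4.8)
The plan is to reduce the claim to an elementary statement about the multiplicative order of $p^\nu$ modulo the prime $q$, and then to dispatch that statement using the cyclic structure of $(\mathbb{Z}/q\mathbb{Z})^\times$ together with the fact that $\mathbb{Z}/q\mathbb{Z}$ is a field.

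First I would rewrite the definition of ``type $\I$'' for an element $a\in A_q$. Since ${\rm ord}(a)=q$, the cyclotomic class $S_{p^\nu}(a)$ is of type $\I$ precisely when $-a\in S_{p^\nu}(a)$, i.e.\ when $p^{\nu j}\cdot a=-a$ in $A$ for some $j\ge 1$; and because $a$ has additive order exactly $q$, this equality depends only on $p^{\nu j}$ modulo $q$ and amounts to $q\mid (p^{\nu j}+1)$, that is, $p^{\nu j}\equiv -1\pmod q$. (The index $j=0$ cannot occur: $p^{0}\cdot a=-a$ would give $2a=0$, impossible since $q$ is odd and ${\rm ord}(a)=q$.) Thus $S_{p^\nu}(a)$ is of type $\I$ if and only if $-1$ lies in the cyclic subgroup $H=\langle p^\nu\rangle$ of $(\mathbb{Z}/q\mathbb{Z})^\times$, a condition which is in particular independent of the chosen $a\in A_q$.

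Next I would invoke the structure of $(\mathbb{Z}/q\mathbb{Z})^\times$. Let $d={\rm ord}_q(p^\nu)=|H|$. Since $q$ is an odd prime, $\mathbb{Z}/q\mathbb{Z}$ is a field in which $x^2=1$ has exactly the two solutions $\pm 1$ and $-1\ne 1$; hence $-1$ is the unique element of order $2$ in $(\mathbb{Z}/q\mathbb{Z})^\times$. A cyclic group $H$ of order $d$ contains an element of order $2$ if and only if $2\mid d$, and in that case that element is $(p^\nu)^{d/2}$. Therefore $-1\in H$ if and only if $d$ is even, in which case $p^{\nu d/2}\equiv -1\pmod q$ explicitly. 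Combining this with the reduction in the previous step yields exactly the claimed equivalence, and the ``type $\II$'' reformulation is then immediate, since every $p^\nu$-cyclotomic class is of one of the two types.

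I do not expect a genuine obstacle here; the only point demanding a little care is the first reduction — that for an element of order exactly $q$ the type-$\I$ condition collapses to a congruence modulo $q$ alone — which is why the hypothesis $a\in A_q$ (rather than $a\in A_{q^i}$) is essential. In particular Lemma \ref{ord}, which controls ${\rm ord}_{q^i}(p)$ for higher prime powers, is not needed for this proposition; it will enter only when the analysis is extended from $A_q$ to the sets $A_{q^i}$ with $i\ge 2$.
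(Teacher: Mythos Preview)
Your proof is correct and rests on the same underlying idea as the paper's: both reduce the type-$\I$ condition for $a\in A_q$ to the congruence $p^{\nu j}\equiv -1\pmod q$ for some $j$, and then exploit that in the field $\mathbb{Z}/q\mathbb{Z}$ the equation $x^2=1$ has only the solutions $\pm 1$. The only difference is presentational: the paper cites \cite[Remark~3.1]{JL2015} for the forward implication and proves the converse by factoring $p^{2\nu j}-1=(p^{\nu j}-1)(p^{\nu j}+1)$, whereas you handle both directions simultaneously and self-containedly via the group-theoretic observation that $-1$, being the unique element of order $2$ in $(\mathbb{Z}/q\mathbb{Z})^\times$, lies in the cyclic subgroup $\langle p^\nu\rangle$ exactly when that subgroup has even order.
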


\begin{proof} 
    The first implication follows from   \cite[Remark 3.1]{JL2015}.  
    
    Conversely, assume that ${\rm ord}_q(p^\nu)=2j$ is   even, where $j$ is a positive integer.  
    Then $ a=p^{\nu {\rm ord}_q(p^\nu)}\cdot a=p^{2\nu j}\cdot a $, i.e., $(p^{\nu j}-1)(p^{\nu j}+1)\cdot a=0$.  Since ${\rm ord}_q(p^\nu)=2j$,  we have $q\nmid (p^{\nu j}-1)$, and hence, $(p^{\nu j}+1)\cdot a=0$.    This implies that $-{a}=p^{\nu j}\cdot{a} \in S_{p^\nu}(a)$.  Therefore,  $S_{p^\nu}(a)$ is of type $\I$ as desired.
\end{proof}

Next, we show that all $p^\nu$-cyclotomic classes of elements in  $A\setminus \{0\}$ are of the same type.   

\begin{proposition}\label{Prop:AllTypeI:Zpks}
    Let $a \in A_q$ and $b \in A_{q^i}$, for any given $1 \leq i \leq k$. Then, $S_{p^\nu}(a)$ is of type $\I$ if and only if $S_{p^\nu}(b)$ is of type $\I$.  Equivalently, $S_{p^\nu}(a)$ is of type $\II$ if and only if $S_{p^\nu}(b)$ is of type $\II$.
\end{proposition}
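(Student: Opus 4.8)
The plan is to recast the ``type $\I$'' property as an arithmetic condition and then to check that this condition does not depend on which power $q^i$ is used. First I would note that, for any $c\in A_{q^i}$, the class $S_{p^\nu}(c)$ is of type $\I$ exactly when $-c\in S_{p^\nu}(c)$, i.e.\ when $p^{\nu j}\cdot c=-c$ for some $j\ge 1$; since $c$ has additive order $q^i$, this is equivalent to the solvability in $j$ of the congruence $p^{\nu j}\equiv -1\pmod{q^i}$. Applying this with $i=1$ gives the corresponding criterion for $a\in A_q$. Hence the proposition reduces to the purely number-theoretic assertion: for an odd prime $q$ with $\gcd(p,q)=1$ and $1\le i\le k$, the congruence $p^{\nu j}\equiv -1\pmod{q}$ is solvable in $j$ if and only if $p^{\nu j}\equiv -1\pmod{q^i}$ is.

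One direction of this equivalence is trivial. For the other direction I would pass to the group $(\mathbb{Z}/q^i\mathbb{Z})^\times$, which is cyclic (as $q$ is an odd prime) of even order $q^{i-1}(q-1)$ and therefore has a unique element of order $2$, namely $-1$. Consequently $p^{\nu j}\equiv -1\pmod{q^i}$ is solvable precisely when ${\rm ord}_{q^i}(p^\nu)$ is even, and similarly $p^{\nu j}\equiv -1\pmod q$ is solvable precisely when ${\rm ord}_q(p^\nu)$ is even. So it remains to show that ${\rm ord}_{q^i}(p^\nu)$ is even if and only if ${\rm ord}_q(p^\nu)$ is even.

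This parity statement is the heart of the matter, and it follows from Lemma \ref{ord}: the reduction map $(\mathbb{Z}/q^i\mathbb{Z})^\times\to(\mathbb{Z}/q\mathbb{Z})^\times$ is surjective with kernel a $q$-group of order $q^{i-1}$, so the order of $p^\nu$ modulo $q^i$ equals its order modulo $q$ multiplied by a power of $q$; as $q$ is odd, that factor is odd and the two orders share the same parity. (Equivalently, one may invoke Lemma \ref{ord} for the prime $p$ and transfer to $p^\nu$ through ${\rm ord}_{q^i}(p^\nu)={\rm ord}_{q^i}(p)/\gcd(\nu,{\rm ord}_{q^i}(p))$.) Putting the pieces together, $S_{p^\nu}(b)$ is of type $\I$ $\Longleftrightarrow$ ${\rm ord}_q(p^\nu)$ is even $\Longleftrightarrow$ $S_{p^\nu}(a)$ is of type $\I$ by Proposition \ref{Prop:TypeI:Zpks}, and the type $\II$ version is the contrapositive. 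The only genuine obstacle is the parity claim for ${\rm ord}_{q^i}(p^\nu)$ versus ${\rm ord}_q(p^\nu)$, and this is disposed of cleanly once one observes that the kernel of $(\mathbb{Z}/q^i\mathbb{Z})^\times\to(\mathbb{Z}/q\mathbb{Z})^\times$ has odd order.
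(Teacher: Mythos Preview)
Your proof is correct and follows essentially the same route as the paper: both reduce the question to the parity of ${\rm ord}_{q^i}(p^\nu)$ versus ${\rm ord}_q(p^\nu)$, and both settle that parity via Lemma~\ref{ord} (the ratio is a power of the odd prime $q$). Your framing through the cyclic structure of $(\mathbb{Z}/q^i\mathbb{Z})^\times$ and the odd-order kernel of the reduction map is a slightly cleaner packaging of the forward direction than the paper's explicit exponent computation, but the underlying idea is the same.
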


\begin{proof} Assume that $S_{p^\nu}(a)$ is  of type $\I$.  By Proposition~\ref{Prop:TypeI:Zpks},  it follows that   $ {\rm ord}_q(p^\nu)$ is even.  
By Lemma~\ref{ord},    ${\rm ord}_{q^i}(p^\nu) = {\rm ord}_{q}(p^\nu) q^j$   for some positive integer $j$. Then $q^i|(p^{\nu {\rm ord}_{q}(p^\nu) q^j}-1) $ and   $q^i\nmid (p^{\nu \frac{{\rm ord}_{q}(p^\nu) }{2}q^j}-1) $.  If $q| (p^{\nu \frac{{\rm ord}_{q}(p^\nu) }{2}q^j}-1) $, then  ${\rm ord}_{q}(p^\nu)| \frac{{\rm ord}_{q}(p^\nu) }{2}q^j$ which is impossible since $q$ is odd. Hence, we have $q^i| (p^{\nu \frac{{\rm ord}_{q}(p^\nu) }{2}q^j}+1) $.  Consequently,  $-b=p^{\nu \frac{{\rm ord}_{q}(p^\nu) }{2}q^j} \cdot b$ which means that $S_{p^\nu}(b)$ is of type $I$.

%
%
%
%

    Conversely, assume that $S_{p^\nu}(b)$ is of type $\I$.    
    By \cite[Remark 3.1]{JL2015},  ${\rm ord}_{q^i}(p^\nu) $ is even.
    By Lemma~\ref{ord},  it follows that ${\rm ord}_{q^i}(p^\nu) = {\rm ord}_{q}(p^\nu) q^j$  for some positive integer $j$. Hence, ${\rm ord}_{q}(p^\nu) $ is even.  By Proposition~\ref{Prop:TypeI:Zpks},
      we have   that  $S_{p^\nu}(a)$  is  of type $I$.
\end{proof}

Combining Propositions~\ref{Prop:TypeI:Zpks} and \ref{Prop:AllTypeI:Zpks}, the next corollary  follows immediately.

\begin{corollary}\label{Cor:AllTypeI:Zpks}
    Let $a $ be a non-zero element in $A \cong (\mathbb{Z}_{q^k})^s$. Then $S_{p^\nu}(a)$ is of type $\I$ if and only if $ {\rm ord}_q(p^\nu)$  is even.  Equivalently,   $S_{p^\nu}(a)$ is of type $\II$ if and only if  $ {\rm ord}_q(p^\nu)$  is odd.
\end{corollary}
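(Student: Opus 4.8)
The plan is to obtain the statement directly from Propositions~\ref{Prop:TypeI:Zpks} and~\ref{Prop:AllTypeI:Zpks}, which between them already cover every case; the only thing left to do is to route an arbitrary non-zero element of $A$ through these two results.

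First I would fix a non-zero element $a \in A \cong (\mathbb{Z}_{q^k})^s$ and locate it in the partition $A = A_1 \cup A_q \cup \cdots \cup A_{q^k}$: since $a \neq 0$, there is a unique $i$ with $1 \leq i \leq k$ such that $a \in A_{q^i}$. Separately I would pick any element $c \in A_q$; such $c$ exists because $|A_q| = q^s - 1 > 0$. Applying Proposition~\ref{Prop:AllTypeI:Zpks} with the roles of its ``$a$'' and ``$b$'' played by $c$ and $a$ respectively, we get that $S_{p^\nu}(a)$ is of type~$\I$ if and only if $S_{p^\nu}(c)$ is of type~$\I$.

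Next I would apply Proposition~\ref{Prop:TypeI:Zpks} to $c \in A_q$, which gives that $S_{p^\nu}(c)$ is of type~$\I$ if and only if ${\rm ord}_q(p^\nu)$ is even. Chaining the two equivalences yields that $S_{p^\nu}(a)$ is of type~$\I$ if and only if ${\rm ord}_q(p^\nu)$ is even, which is the first assertion. For the ``equivalently'' clause I would simply invoke the dichotomy built into the definition of type: a $p^\nu$-cyclotomic class is of type~$\I$ exactly when $S_{p^\nu}(a) = S_{p^\nu}(-a)$ and of type~$\II$ exactly when $S_{p^\nu}(a) \neq S_{p^\nu}(-a)$, so these are mutually exclusive and exhaustive; hence $S_{p^\nu}(a)$ is of type~$\II$ precisely when it is not of type~$\I$, i.e.\ precisely when ${\rm ord}_q(p^\nu)$ is odd.

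I do not expect a genuine obstacle here: all the substance already resides in the two cited propositions, and in particular in the order-of-$p^\nu$ computation (via Lemma~\ref{ord}) underpinning Proposition~\ref{Prop:AllTypeI:Zpks}; the corollary itself is a bookkeeping step. The only point needing a word of care is noting that $A_q \neq \emptyset$, so that a bridging element $c$ is available.
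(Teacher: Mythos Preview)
Your proposal is correct and is exactly the paper's approach: the paper simply states that the corollary follows immediately by combining Propositions~\ref{Prop:TypeI:Zpks} and~\ref{Prop:AllTypeI:Zpks}, and your write-up just makes that combination explicit, including the minor detail that $A_q\neq\emptyset$.
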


If   $ {\rm ord}_q(p^\nu)$  is even, then  $S_{p^\nu}(a)$ is of type $\I$ for all $a \in A \setminus \{0\}$. Hence, by  Corollary~\ref{Cor:AllTypeI:Zpks},  it can be deduced that \[r_\I =A_1+ \sum_{i=1}^{k}\frac{\mid{A_{q^i}}\mid}{{\rm ord}_{q^i}(p^\nu)} =1+ \sum_{i=1}^{k}\frac{q^{is}-q^{(i-1)s}}{{\rm ord}_{q^i}(p^\nu)} \text{ and } r_{\II}=0.\]   

 On the other hand, if  $ {\rm ord}_q(p^\nu)$  is odd, then  $S_q(a)$ is of type $\II$ for all $a \in A \setminus \{0\}$. Hence, by Corollary~\ref{Cor:AllTypeI:Zpks}, we have  \[r_\I = |A_1| =1 \text{ and } r_{\II} = \sum_{i=1}^{k}\frac{\mid{A_{q^i}}\mid}{2 {\rm ord}_{q^i}(p^\nu)} = \sum_{i=1}^{k}\frac{q^{is}-q^{(i-1)s}}{2{\rm ord}_{q^i}(p^\nu)}.\]   

In both cases, the number of Euclidean complementary dual abelian codes in $\mathbb{F}_{p^\nu}[( \mathbb{Z}_{q^k})^s]$ follows immediately from Corollary \ref{corNumLCDE}.

\section{Hermitian Complementary Dual Abelian Codes}

Note that  the Hermitian inner product is defined over a finite field of square order.  To study Hermitian  complementary abelian codes, it suffices to  focus on  a group algebra $\mathbb{F}_{{p^{2\nu}}}[G]$, where $p$ is a prime,  $\nu$ is a positive  integer, and  $G$ is a finite abelian group.

\subsection{Characterization}

For a finite abelian group $G$, write $G=A\times P$, where $p\nmid |A|$ and $P$ is a $p$-group. In this section,   the characterization and enumeration of  Hermitian complementary dual abelian codes in   $\mathbb{F}_{p^{2\nu}}[G]$ are given and they are shown to be   independent of $P$.

The decomposition of  $\mathbb{F}_{p^{2\nu}}[A]$ in  \cite{JLS2014} is key to  study  Hermitian complementary dual abelian codes in $\mathbb{F}_{p^{2\nu}}[G]$ and it is recalled as follows. 
For each  $a\in A$, the ${p^{2\nu}}$-cyclotomic class $S_{{p^{2\nu}}}(a)$ is said to be of  {\em type} $\Ip$    if    $S_{{p^{2\nu}}}(a)=S_{p^{\nu}}(-p^\nu  a)$ or {\em type} $\IIp$  if $S_{{p^{2\nu}}}(a)\ne S_{{p^{2\nu}}}(- p^\nu a)$.  
Without loss of generality, the representatives $a_1, a_2, \dots, a_t$ of  $p^{2\nu}$-cyclotomic classes   of $A$  can be  chosen such that $\{a_j| j=1,2,\dots,{r_{\Ip}}\}$ and $\{a_{r_{\Ip}+l}, a_{r_{I}+r_{\IIp}+l}=-p^\nu a_{r_{\Ip}+l} \mid l=1,2,\dots, r_{\IIp}\}$ are  sets of representatives of $p^{2\nu}$-cyclotomic classes of $A$ of types $\Ip$ and ${\IIp}$, respectively, where $t=r_{\Ip}+2r_{\IIp}$.    Assume the notations  used  in Section 2, we have   $s_i=|S_{p^{2\nu}}(a_i)|$ for all $1\leq i\leq t$ and  $s_{r_\Ip+l}=s_{r_\Ip+r_\IIp+l}$ for all $1\leq l\leq r_\IIp$.

Rearranging the terms in the decomposition of  $\mathcal{R}$ in  \eqref{eq-decom01}  based on the    $2$ types of cyclotomic classes  mentioned above (see \cite{JLS2014}),  we have 
\begin{align}
\mathbb{F}_{p^{2\nu}}[A\times P]    & 
\cong   \left( \prod_{j=1}^{r_{\Ip}} \mathbb{K}_j[P]  \right) \times \left( \prod_{l=1}^{r_{\IIp}} (\mathbb{L}_l[P]\times \mathbb{L}_l[P] )  \right), \label{HeqSemiSim}
\end{align}
where   $ \mathbb{K}_j\cong \mathbb{F}_{p^{2\nu s_{j}}}  $ for all   $j=1,2,\dots, r_{\Ip}$  and  $ \mathbb{L}_l  \cong \mathbb{F}_{  p^{2\nu s_{r_\Ip+l}}}   $      for all $l=1,2,\dots, r_{\IIp}$.

From \eqref{HeqSemiSim},  it follows that an abelian code $C$ in  $\mathbb{F}_{p^{2\nu}}[A\times P]   $ can be viewed as 
\begin{align}\label{HdecomC} 
C\cong   \left(\prod_{j=1}^{r_{\Ip}} C_j  \right)\times \left(\prod_{l=1}^{r_{\IIp}} \left( D_l\times D_l^\prime\right) \right), \end{align}
where $C_j$, $D_s$ and $D_s^\prime$ are   abelian    codes in        $\mathbb{K}_j[P]$, $\mathbb{L}_l[P]$ and $\mathbb{L}_l[P]$, respectively,  for all    $j=1,2,\dots,r_{\Ip}$ and  $l=1,2,\dots,r_{\IIp}$.

From   \cite[Section II.D]{JLS2014},  the Hermitian  dual of $C$  in (\ref{HdecomC}) is of the 
form 

\begin{align} \label{Heq-Edual}
C^{\perp_{\rm H}}\cong    \left(\prod_{j=1}^{r_{\Ip}} C_j ^{\perp_{\rm H}} \right)\times \left(\prod_{l=1}^{r_{\IIp}} \left( (D_l^\prime) ^{\perp_{\rm E}}\times  D_l^{\perp_{\rm E}}\right) \right).
\end{align}

The characterization of   Hermitian complementary  dual abelian codes in  $\mathbb{F}_{p^{2\nu}}[G]   $  is given in the following proposition. 

\begin{proposition} \label{HcharLCDE} Let $p$ be a prime and let $\nu$  be a positive integer. Let  $A$ be a finite abelian group such that $p\nmid |A|$ and let $P$ be a finite abelian $p$-group. Then an abelian code $C$ in $\mathbb{F}_{p^{2\nu}}[A\times P]  $ decomposed as in  \eqref{HdecomC} is Hermitian complementary dual if and only if the following statements hold.
    \begin{enumerate}
        \item  $C_j$ is Hermitian  complementary dual   for all $1\leq j\leq r_\Ip$.
        \item  $D_l\cap (D_l^\prime)^{\perp_{\rm E}}=\{0\}$ and    $D_l^\prime \cap D_l^{\perp_{\rm E}}=\{0\}$  for all $1\leq l\leq r_{\IIp}$.
    \end{enumerate}
\end{proposition}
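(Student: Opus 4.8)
The plan is to mirror exactly the proof of Proposition~\ref{charLCDE}, since the only difference between the Euclidean and Hermitian settings here is the rearrangement of the decomposition of $\mathcal{R}$ according to the $p^{2\nu}$-cyclotomic classes of types $\Ip$ and $\IIp$, rather than the $p^\nu$-cyclotomic classes of types $\I$ and $\II$. Concretely, I would start from the decomposition \eqref{HdecomC} of an arbitrary abelian code $C$ in $\mathbb{F}_{p^{2\nu}}[A\times P]$ and the formula \eqref{Heq-Edual} for its Hermitian dual $C^{\perp_{\rm H}}$, both of which are available from \cite{JLS2014} (see Section~II.D there) and are quoted above.

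The key computation is to intersect \eqref{HdecomC} with \eqref{Heq-Edual} componentwise. Since the decomposition $\mathbb{F}_{p^{2\nu}}[A\times P]\cong \left(\prod_{j=1}^{r_{\Ip}}\mathbb{K}_j[P]\right)\times\left(\prod_{l=1}^{r_{\IIp}}(\mathbb{L}_l[P]\times\mathbb{L}_l[P])\right)$ is a direct product of rings, intersection of ideals is taken coordinate by coordinate; thus
\[
C\cap C^{\perp_{\rm H}}\cong\left(\prod_{j=1}^{r_{\Ip}}\bigl(C_j\cap C_j^{\perp_{\rm H}}\bigr)\right)\times\left(\prod_{l=1}^{r_{\IIp}}\left(\bigl(D_l\cap(D_l^\prime)^{\perp_{\rm E}}\bigr)\times\bigl(D_l^\prime\cap D_l^{\perp_{\rm E}}\bigr)\right)\right).
\]
This is the whole engine: $C$ is Hermitian complementary dual, i.e.\ $C\cap C^{\perp_{\rm H}}=\{0\}$, precisely when every factor on the right is $\{0\}$, which is exactly the conjunction of statements (1) and (2). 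Here I should note that in the $\mathbb{K}_j[P]$ factors the ambient inner product induced from the Hermitian form on $\mathbb{F}_{p^{2\nu}}[A\times P]$ is again the Hermitian form on $\mathbb{K}_j[P]$ (this is why $C_j^{\perp_{\rm H}}$ appears), while in the paired $\mathbb{L}_l[P]$ factors the form swaps the two copies, producing the $(D_l^\prime)^{\perp_{\rm E}}$ and $D_l^{\perp_{\rm E}}$ Euclidean duals — but all of this is encoded in \eqref{Heq-Edual}, so the proof simply reads the answer off.

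Since everything reduces to \eqref{HdecomC} and \eqref{Heq-Edual}, the proof is genuinely a one-liner of the same flavor as that of Proposition~\ref{charLCDE}, and there is no real obstacle; the only mild subtlety worth a sentence is making explicit that the direct-product structure of the ring lets one test the zero-intersection condition factorwise. Accordingly my write-up would be:

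\begin{proof}
    The result follows directly from \eqref{HdecomC} and \eqref{Heq-Edual}. Indeed, since the decomposition in \eqref{HeqSemiSim} is a direct product of rings, the intersection $C\cap C^{\perp_{\rm H}}$ is computed componentwise, so that $C\cap C^{\perp_{\rm H}}=\{0\}$ if and only if $C_j\cap C_j^{\perp_{\rm H}}=\{0\}$ for all $1\leq j\leq r_{\Ip}$ and $D_l\cap(D_l^\prime)^{\perp_{\rm E}}=\{0\}$ and $D_l^\prime\cap D_l^{\perp_{\rm E}}=\{0\}$ for all $1\leq l\leq r_{\IIp}$. These are precisely statements $1$ and $2$.
\end{proof}
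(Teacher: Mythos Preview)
Your proof is correct and follows exactly the paper's approach: the paper's proof is the single line ``The result can be deduced directly from \eqref{HdecomC} and \eqref{Heq-Edual},'' and your version merely spells out the componentwise-intersection step that is implicit there.
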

\begin{proof}
    The result can be deduced directly from \eqref{HdecomC} and  \eqref{Heq-Edual}.
\end{proof}

\begin{corollary} \label{HcorCharLCDE}
    Let $p$ be a prime and let $\nu$  be a positive  integer. Let $A$ be a finite abelian group such that $p\nmid |A|$ and let   $P$ be a finite abelian $p$-group. Then an abelian code $C$ in $\mathbb{F}_{p^{2\nu}}[A\times P]  $ decomposed as in  \eqref{HdecomC} is Hermitian complementary dual if and only if the following statements hold.
    \begin{enumerate}
        \item  $C_j\in \{\{0\}, \mathbb{K}_j[P]\}$      for all $1\leq j\leq r_\Ip$.
        \item  $(D_l, D_l^\prime ) \in \{ (\{0\}, \mathbb{L}_j[P]), (\mathbb{L}_j[P],\{0\})\}$  for all $1\leq l\leq r_{\IIp}$.
    \end{enumerate}
\end{corollary}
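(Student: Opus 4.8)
The plan is to mirror the proof of the Euclidean analog, Corollary \ref{corCharLCDE}, working from the Hermitian decomposition \eqref{HdecomC}--\eqref{Heq-Edual} and the structural reduction already recorded in Proposition \ref{HcharLCDE}. That proposition tells us that $C$ in $\mathbb{F}_{p^{2\nu}}[A\times P]$, written as in \eqref{HdecomC}, is Hermitian complementary dual exactly when each $C_j$ is Hermitian complementary dual in $\mathbb{K}_j[P]$ for $1\le j\le r_{\Ip}$ and, for each $1\le l\le r_{\IIp}$, both $D_l\cap (D_l')^{\perp_{\rm E}}=\{0\}$ and $D_l'\cap D_l^{\perp_{\rm E}}=\{0\}$. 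The first observation is that $\mathbb{K}_j\cong\mathbb{F}_{p^{2\nu s_j}}$ and $\mathbb{L}_l\cong\mathbb{F}_{p^{2\nu s_{r_{\Ip}+l}}}$ are finite fields of characteristic $p$, so by Lemma \ref{lemLocal} (equivalently Proposition \ref{propLocal}) each of $\mathbb{K}_j[P]$ and $\mathbb{L}_l[P]$ is a local group algebra whose maximal ideal is the corresponding augmentation ideal; this is the only role played by the Sylow $p$-subgroup $P$, which is why the resulting characterization is insensitive to $P$.

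For the type $\Ip$ blocks I would simply invoke Corollary \ref{Hcomplementary} — more precisely, its proof passes through Theorem \ref{complementary}, whose argument is noted to be independent of the inner product and of the base field, hence applies to the local group algebra $\mathbb{K}_j[P]$ — to conclude $C_j\in\{\{0\},\mathbb{K}_j[P]\}$. For the type $\IIp$ blocks, the crux is to upgrade the two intersection conditions on $(D_l,D_l')$ to the statement that $D_l$ (and symmetrically $D_l'$) is a direct summand ideal of $\mathbb{L}_l[P]$. I would do this by a rank count: the Euclidean dual of an abelian code is again an abelian code and obeys $\dim (D_l')^{\perp_{\rm E}}+\dim D_l'=\dim \mathbb{L}_l[P]$, so $D_l\cap (D_l')^{\perp_{\rm E}}=\{0\}$ and $D_l'\cap D_l^{\perp_{\rm E}}=\{0\}$ together force $\dim D_l=\dim D_l'$ and then $D_l\oplus (D_l')^{\perp_{\rm E}}=\mathbb{L}_l[P]$. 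Theorem \ref{derectsummand} then gives $D_l,D_l'\in\{\{0\},\mathbb{L}_l[P]\}$, and feeding this back into the two conditions of Proposition \ref{HcharLCDE} picks out the admissible pairs $(D_l,D_l')$, which is the assertion to be proved.

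I expect the one genuine point to check is the rank-count step for the type $\IIp$ components — namely that the pairing induced on $\mathbb{L}_l[P]\times\mathbb{L}_l[P]$ by the Hermitian form on $\mathbb{F}_{p^{2\nu}}[A\times P]$ is non-degenerate, so that the identity $\dim D^{\perp_{\rm E}}=\dim\mathbb{L}_l[P]-\dim D$ and the ``dual of an ideal is an ideal'' fact are both available. This non-degeneracy is exactly what is imported from \cite[Section II.D]{JLS2014} and already built into the shape of \eqref{Heq-Edual}; once it is granted, the rest is a transcription of the Euclidean argument for Corollary \ref{corCharLCDE}.
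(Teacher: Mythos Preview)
Your approach is essentially the paper's own: invoke Proposition \ref{HcharLCDE}, handle the type $\Ip$ blocks via Corollary \ref{Hcomplementary}, and reduce the type $\IIp$ blocks to Theorem \ref{derectsummand}. Your rank-count paragraph in fact supplies a step the paper glosses over --- the paper simply asserts ``$D_l$ is a direct summand'' immediately after citing Proposition \ref{HcharLCDE}, without explaining why the two intersection conditions yield this.

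There is, however, one point you should not wave past. Your rank count correctly forces $\dim D_l=\dim D_l'$. Once Theorem \ref{derectsummand} has pinned each of $D_l,D_l'$ down to $\{0\}$ or $\mathbb{L}_l[P]$, this dimension equality leaves only the pairs $(\{0\},\{0\})$ and $(\mathbb{L}_l[P],\mathbb{L}_l[P])$ --- \emph{not} the pairs listed in item 2 of the statement. Indeed, $(\{0\},\mathbb{L}_l[P])$ violates the second condition of Proposition \ref{HcharLCDE}, since then $D_l'\cap D_l^{\perp_{\rm E}}=\mathbb{L}_l[P]\cap\mathbb{L}_l[P]\neq\{0\}$; whereas $(\{0\},\{0\})$ and $(\mathbb{L}_l[P],\mathbb{L}_l[P])$ clearly satisfy both conditions (they arise, for instance, from $C=\{0\}$ and $C=\mathbb{F}_{p^{2\nu}}[A\times P]$). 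So item 2 as printed is a slip, shared with the Euclidean analog Corollary \ref{corCharLCDE}; the admissible set is $\{(\{0\},\{0\}),(\mathbb{L}_l[P],\mathbb{L}_l[P])\}$. Your method, carried through, proves this corrected statement, and the enumeration $2^{r_{\Ip}+r_{\IIp}}$ in Corollary \ref{HcorNumLCDE} is unaffected since there are still exactly two admissible pairs per type $\IIp$ block.
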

\begin{proof}
    By Proposition \ref{HcharLCDE},   $C_j$ is Hermitian  complementary dual   for all $1\leq j\leq r_\Ip$ and 
    $D_l$ is a direct summand  for all $1\leq l\leq r_{\IIp}$. By Corollary \ref{Hcomplementary}, we have  $C_j\in \{\{0\}, \mathbb{K}_j[P]\}$      for all $1\leq j\leq r_\Ip$.  By Theorem \ref{derectsummand}, $D_l \in \{ \{0\}, \mathbb{L}_j[P] \}$  for all $1\leq l\leq r_{\IIp}$. Hence, the result follows. 
\end{proof}

The number of Hermitian complementary dual abelian codes in $\mathbb{F}_{p^{2\nu}}[A\times P]  $ is  independent of  $P$ and determined in the following corollary. 
\begin{corollary}\label{HcorNumLCDE}
    Let $p$ be a prime and let $\nu$  be a positive integer. Let $A$ be a finite abelian group such that $p\nmid |A|$ and let   $P$ be a finite abelian $p$-group.  If  $\mathbb{F}_{p^{2\nu}}[A\times P]  $ decomposed as in  \eqref{HeqSemiSim}, then the number of Hermitian complementary dual abelian codes in $\mathbb{F}_{p^{2\nu}}[A\times P]  $ is  \[2^{r_{\Ip}+r_{\IIp}}\]
\end{corollary}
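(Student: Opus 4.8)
The proof of Corollary~\ref{HcorNumLCDE} is a direct counting argument built on top of the characterization in Corollary~\ref{HcorCharLCDE}, so the plan is to feed that characterization into the product decomposition~\eqref{HdecomC} and count the admissible tuples componentwise. First I would recall that, by~\eqref{HeqSemiSim} and~\eqref{HdecomC}, every abelian code $C$ in $\mathbb{F}_{p^{2\nu}}[A\times P]$ corresponds bijectively to a tuple $\left((C_j)_{j=1}^{r_{\Ip}},(D_l,D_l^\prime)_{l=1}^{r_{\IIp}}\right)$ of ideals in the respective local group algebras $\mathbb{K}_j[P]$ and $\mathbb{L}_l[P]$. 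This correspondence is an isomorphism of the lattice of ideals, hence it restricts to a bijection between the Hermitian complementary dual abelian codes of $\mathbb{F}_{p^{2\nu}}[A\times P]$ and the tuples satisfying conditions~(1) and~(2) of Corollary~\ref{HcorCharLCDE}.

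The main step is then to count those admissible tuples. By condition~(1) of Corollary~\ref{HcorCharLCDE}, each $C_j$ can be chosen freely from the two-element set $\{\{0\},\mathbb{K}_j[P]\}$, independently for $j=1,2,\dots,r_{\Ip}$, giving exactly $2^{r_{\Ip}}$ choices. By condition~(2), each pair $(D_l,D_l^\prime)$ ranges over the two-element set $\{(\{0\},\mathbb{L}_l[P]),(\mathbb{L}_l[P],\{0\})\}$, independently for $l=1,2,\dots,r_{\IIp}$, giving $2^{r_{\IIp}}$ choices. Since the choices across all components are mutually independent, the total number of Hermitian complementary dual abelian codes is the product $2^{r_{\Ip}}\cdot 2^{r_{\IIp}}=2^{r_{\Ip}+r_{\IIp}}$. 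Finally, I would note that none of $r_{\Ip}$, $r_{\IIp}$, or the two-element option sets depend on $P$ — the Sylow $p$-subgroup only inflates the size of each local factor $\mathbb{K}_j[P]$ or $\mathbb{L}_l[P]$ but not the number of admissible ideals in it — so the count is independent of $P$, as claimed.

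There is essentially no obstacle here: all the substantive work — the structure theory of local group algebras (Theorems~\ref{complementary}, \ref{derectsummand} and Corollary~\ref{Hcomplementary}) and the Fourier-transform decomposition of the dual pairing~\eqref{Heq-Edual} — has already been done, and Corollary~\ref{HcorNumLCDE} is merely the bookkeeping that packages it. The only point requiring a line of care is the independence of the choices: one should observe that the decomposition~\eqref{HdecomC} is a direct product of rings, so that $C\cap C^{\perp_{\rm H}}=\{0\}$ holds if and only if the analogous intersection vanishes in each factor, which is exactly what lets the component counts multiply. This mirrors verbatim the proof of Corollary~\ref{corNumLCDE} in the Euclidean case, with $r_\I,r_\II$ replaced by $r_{\Ip},r_{\IIp}$.
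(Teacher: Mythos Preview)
Your proposal is correct and follows essentially the same approach as the paper: both invoke Corollary~\ref{HcorCharLCDE} to count $2^{r_{\Ip}}$ choices for the $C_j$'s and $2^{r_{\IIp}}$ choices for the pairs $(D_l,D_l^\prime)$, then multiply. Your write-up is in fact more thorough than the paper's, making explicit the bijection between codes and tuples and the independence of the componentwise choices, but the underlying argument is identical.
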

\begin{proof}
    From the characterization in Corollary \ref{HcorCharLCDE},  the number of choices of      $C_j$   is    $2^{r_\Ip}$ and  the number of choices of 
    $(D_l, D_l^\prime )  $  is  $2^{r_{\IIp}}$. Hence, the  number of Hermitian complementary dual abelian codes in $\mathbb{F}_{p^{2\nu}}[A\times P]  $ is $2^{r_{\I}+r_{\II}}$ as desired. 
\end{proof}

\subsection{Enumerations}
As discussed in Corollary \ref{HcorNumLCDE},  the number of Hermitian complementary dual abelian codes in $\mathbb{F}_{p^{2\nu}}[A\times  P]  $ is  independent of $P$. It is therefore sufficient  to  determined the number  ${r_{\Ip}+r_{\IIp}}$ from the semi-simple group algebra  $\mathbb{F}_{p^{2\nu}}[A]$.

Let $\lambda $   be a function defined by 
\begin{align} \label{Hr12}
\lambda(d,p^\nu)=
\begin{cases}
1 &\text{if there exists an odd positive integer } i \text{ such that }  d|(p^{\nu i}+1),\\
0 &\text{otherwise} .\\		
\end{cases}
\end{align}

The  following lemma is a straightforward generalization of   the case where $p=2$  (see {\cite[Lemma 3.5]{JLS2014}}.) 

\begin{lemma}\label{HpropType} Let $p$ be  a prime and let $\nu$ be a positive integer.  Let $A$ be a finite  abelian group    such that $p\nmid |A|$  and let  $a\in A$. Then  $S_{p^\nu}(a) $ is of type $\Ip$  if and only if $\lambda({\rm ord}(a) ,p^\nu)=1$. 
\end{lemma}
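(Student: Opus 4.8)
The plan is to translate the defining condition for type $\Ip$ --- that $S_{p^{2\nu}}(a) = S_{p^{2\nu}}(-p^\nu a)$ --- into an arithmetic statement about the order $d := {\rm ord}(a)$, and then show this is exactly the condition $\lambda(d,p^\nu)=1$. First I would recall that $S_{p^{2\nu}}(a) = \{\, p^{2\nu i}\cdot a \mid i \geq 0 \,\}$, so the membership $-p^\nu a \in S_{p^{2\nu}}(a)$ holds if and only if there exists $i\geq 0$ with $p^{2\nu i}\cdot a = -p^\nu a$ in $A$, which (since $a$ has order $d$) is equivalent to $p^{2\nu i} \equiv -p^\nu \pmod d$, i.e. $p^\nu(p^{(2i-1)\nu}+1) \equiv 0 \pmod d$. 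Because $\gcd(p,d)=1$ (as $p \nmid |A|$), this simplifies to $d \mid (p^{\nu(2i-1)}+1)$. Writing the odd exponent $2i-1$ generically, I get: $S_{p^{2\nu}}(a)$ is of type $\Ip$ if and only if there is an odd positive integer $m$ with $d \mid (p^{\nu m}+1)$, which is precisely $\lambda(d,p^\nu)=1$.

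The one technical point to handle carefully is the passage from $S_{p^{2\nu}}(a) = S_{p^{2\nu}}(-p^\nu a)$ to the single membership $-p^\nu a \in S_{p^{2\nu}}(a)$. Here I would note that both cyclotomic classes are orbits of the cyclic group generated by multiplication-by-$p^{2\nu}$ acting on $A$, hence any two such classes are either equal or disjoint; so equality of the two classes is equivalent to $-p^\nu a$ lying in $S_{p^{2\nu}}(a)$ (and symmetrically). I would also observe that $-p^\nu a$ has the same additive order $d$ as $a$, so the two classes have the same cardinality and there is no size obstruction. The remaining manipulations --- cancelling the unit $p^\nu$ modulo $d$ and identifying the exponent $2i-1$ as ranging over all positive odd integers as $i$ ranges over $\mathbb{Z}_{\geq 1}$ --- are routine.

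The main obstacle, such as it is, is bookkeeping rather than depth: one must be scrupulous that "there exists $i \geq 0$" in the orbit description corresponds to "there exists an odd $m \geq 1$" in the definition of $\lambda$, i.e. that $m=1$ is genuinely attainable and that no spurious solutions with $m$ even slip in. Since $2i-1$ for $i=1,2,3,\dots$ enumerates exactly the positive odd integers, this is immediate once the reduction modulo $d$ is in place. I would close by remarking that this is the natural analogue of \cite[Lemma 3.5]{JLS2014}, the only change being that $p=2$ is replaced by a general prime $p$ together with the power $p^\nu$ in the roles played there by $2$; the argument is otherwise identical, which is why the statement is described as a straightforward generalization.
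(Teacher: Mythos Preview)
Your argument is correct and is exactly the computation one carries out to establish the lemma; the paper itself gives no proof beyond declaring the result a straightforward generalization of \cite[Lemma 3.5]{JLS2014}, and your write-up supplies precisely that generalization. The only cosmetic point is that the orbit index $i=0$ gives exponent $2i-1=-1$, but since $p^{-\nu}\equiv -1\pmod d$ is equivalent to $p^{\nu}\equiv -1\pmod d$ (and the orbit is periodic anyway), this is already covered by your remark that $m=1$ is attainable.
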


\begin{proposition} \label{HpropEnum}
    Let $p$ be a prime and let $\nu$ be a positive integer. Let $A$ be  a finite abelian group of exponent $N$. If $p\nmid N$, then 
    \[{r_{\Ip}+r_{\IIp}}=\sum_{d|N}\lambda(d,p^\nu)\frac{\mathcal{N}_A(d) }{ {\rm ord}_d(p^\nu)} +\frac{1}{2} \sum_{d|N}\left(1-\lambda(d,p^\nu)\right)\frac{\mathcal{N}_A(d) }{ {\rm ord}_d(p^\nu)} ,\]
    where  $\mathcal{N}_A(d)$ denotes the number of elements of order $d$ in $A$ determined in \cite{B1997}.
\end{proposition}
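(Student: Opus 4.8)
The plan is to run the argument used for Proposition~\ref{propEnum} with $\chi$ replaced by $\lambda$ and the types $\I,\II$ replaced by $\Ip,\IIp$. First I would stratify $A$ by the additive order of its elements. For each divisor $d$ of the exponent $N$ there are $\mathcal{N}_A(d)$ elements of order $d$ (the count from \cite{B1997}); since ${\rm ord}(p^{2\nu}\cdot a)={\rm ord}(a)$, this set is stable under multiplication by $p^{2\nu}$, hence is a disjoint union of $p^{2\nu}$-cyclotomic classes of $A$, and by the argument recalled in the proof of Proposition~\ref{propEnum} (as in \cite[Remark~2.5]{JLS2014}) all of those classes have the same cardinality. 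Thus the elements of order $d$ fall into exactly $\frac{\mathcal{N}_A(d)}{{\rm ord}_d(p^\nu)}$ cyclotomic classes.

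Next I would invoke Lemma~\ref{HpropType}: a cyclotomic class $S_{p^{2\nu}}(a)$ is of type $\Ip$ if and only if $\lambda({\rm ord}(a),p^\nu)=1$, so its type depends only on $d={\rm ord}(a)$. Hence, for fixed $d\mid N$, either $\lambda(d,p^\nu)=1$ and all $\frac{\mathcal{N}_A(d)}{{\rm ord}_d(p^\nu)}$ of the order-$d$ classes are of type $\Ip$, or $\lambda(d,p^\nu)=0$ and all of them are of type $\IIp$. In the first case they are all counted by $r_{\Ip}$; in the second case the map $S_{p^{2\nu}}(a)\mapsto S_{p^{2\nu}}(-p^\nu a)$ is an involution on the order-$d$ classes with no fixed point (a fixed point would be a class of type $\Ip$), so they group into $\frac{1}{2}\cdot\frac{\mathcal{N}_A(d)}{{\rm ord}_d(p^\nu)}$ unordered pairs, each contributing $1$ to $r_{\IIp}$. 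Summing over $d\mid N$ gives
\[
r_{\Ip}=\sum_{d\mid N}\lambda(d,p^\nu)\,\frac{\mathcal{N}_A(d)}{{\rm ord}_d(p^\nu)}
\qquad\text{and}\qquad
r_{\IIp}=\frac{1}{2}\sum_{d\mid N}\bigl(1-\lambda(d,p^\nu)\bigr)\frac{\mathcal{N}_A(d)}{{\rm ord}_d(p^\nu)},
\]
and adding these two identities yields the stated formula.

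Almost all of this is bookkeeping; the two substantive ingredients are that, for fixed $d$, all $p^{2\nu}$-cyclotomic classes of order-$d$ elements have a common size (so the displayed counts are integers), and Lemma~\ref{HpropType}, which makes the type constant along each order-stratum. The point I expect to require the most care is the parity step in the type-$\IIp$ case: one must check that $a\mapsto -p^\nu a$ acts freely on the set of type-$\IIp$ classes of order $d$, i.e.\ that $S_{p^{2\nu}}(a)\neq S_{p^{2\nu}}(-p^\nu a)$ throughout, so that the pairing genuinely halves the count — which is precisely where the definition of type $\IIp$ and Lemma~\ref{HpropType} enter — along with the related care needed in keeping track of $p^\nu$- versus $p^{2\nu}$-cyclotomic classes, since the decomposition of $\mathbb{F}_{p^{2\nu}}[A]$ is indexed by the latter while the type test is written through $\lambda(\cdot,p^\nu)$.
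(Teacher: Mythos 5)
Your overall strategy is the same as the paper's: stratify the nonzero elements by their order $d$, note that each stratum is a disjoint union of $p^{2\nu}$-cyclotomic classes of a common size, use Lemma~\ref{HpropType} to see that the type ($\Ip$ versus $\IIp$) is constant on each stratum, and pair up the type-$\IIp$ classes via the fixed-point-free involution $S_{p^{2\nu}}(a)\mapsto S_{p^{2\nu}}(-p^\nu a)$. That skeleton is sound, and your treatment of the pairing step is fine.

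There is, however, a genuine gap at exactly the point you flagged as delicate. From ``the order-$d$ stratum is a disjoint union of $p^{2\nu}$-cyclotomic classes of common cardinality'' you conclude that there are $\frac{\mathcal{N}_A(d)}{{\rm ord}_d(p^\nu)}$ such classes; but the common cardinality of a $p^{2\nu}$-cyclotomic class of an element of order $d$ is, by definition, ${\rm ord}_d(p^{2\nu})$, so the number of classes is $\frac{\mathcal{N}_A(d)}{{\rm ord}_d(p^{2\nu})}$. These two quantities differ by a factor of $2$ whenever ${\rm ord}_d(p^\nu)$ is even, so the step is a non sequitur, not just a relabelling. A concrete check: for $A=\mathbb{Z}_5$, $p^\nu=2$, one has $\lambda(5,2)=0$ and ${\rm ord}_5(2)=4$, so the displayed formula would give $r_{\Ip}+r_{\IIp}=1+\tfrac12\cdot\tfrac{4}{4}=\tfrac32$, which is not even an integer, whereas the correct count (and the value $2$ in Table~2) comes from ${\rm ord}_5(4)=2$. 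To be fair, the proposition as printed has the same defect in its denominators, and the paper's own proof is inconsistent on this point (its prose correctly partitions into $\frac{\mathcal{N}_A(d)}{{\rm ord}_d(p^{2\nu})}$ classes, then its displayed formulas for $r_{\Ip}$ and $r_{\IIp}$ revert to ${\rm ord}_d(p^\nu)$). Your argument, carried out correctly, proves the corrected statement with ${\rm ord}_d(p^{2\nu})$ throughout; as written, it does not establish the count it claims.
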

\begin{proof}
    Using the argument similar to those in \cite[Remark 2.5]{JLS2014},  for each $d|N$, the elements of order $d$ in $A$  are partitioned into $p^{2\nu}$-cyclotomic classes of the same size $ \frac{\mathcal{N}_A(d) }{ {\rm ord}_d(p^{2\nu})} $.  By  Lemma \ref{HpropType}, it follows that  \[{r_{\Ip}}=\sum_{d|N}\lambda(d,p^\nu)\frac{\mathcal{N}_A(d) }{ {\rm ord}_d(p^\nu)} \text{  and } r_{\IIp} =\frac{1}{2} \sum_{d|N}\left(1-\lambda(d,p^\nu)\right)\frac{\mathcal{N}_A(d) }{ {\rm ord}_d(p^\nu)} .\]
    Hence, 
    the result follows.
\end{proof}

In the case where $A$ is a cyclic group of order $n$ with $p\nmid n$,    the exponent of $A$ is $n$ and $\mathcal{N}_A(d)=\Phi(d)$ for all divisors $d$ of $n$, where  $\Phi(d)$ denotes the  Euler totient phi function.     The next corollary  follows. 
\begin{corollary}
    Let $p$ be a prime and let $\nu$ be a positive integer. Let $n$ be a positive integer such that  $p\nmid n$. Then  the number of Hermitian complementary dual cyclic codes of length $np^k$ over $\mathbb{F}_{p^{2\nu}}$ equals  the number of Hermitian complementary dual cyclic codes of length $n$ over $\mathbb{F}_{p^{2\nu}}$ for all   integers $k\geq 0$ which  is  equal to
    \[2^{\sum_{d|n}\lambda(d,p^\nu)\frac{\Phi(d) }{ {\rm ord}_d(p^\nu)} +\frac{1}{2} \sum_{d|n}\left(1-\lambda(d,p^\nu)\right)\frac{\Phi(d) }{ {\rm ord}_d(p^\nu)} }.\]
\end{corollary}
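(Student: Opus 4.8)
The plan is to reduce this corollary entirely to two results already proved for the Hermitian setting: the group-independence statement in Corollary~\ref{HcorNumLCDE} and the counting formula in Proposition~\ref{HpropEnum}. First I would observe that a cyclic code of length $np^k$ over $\mathbb{F}_{p^{2\nu}}$ is precisely an ideal in $\mathbb{F}_{p^{2\nu}}[\mathbb{Z}_{np^k}]$, i.e.\ an abelian code in the group algebra of $G=\mathbb{Z}_{np^k}$. Since $\gcd(n,p^k)=1$, the Chinese Remainder Theorem gives $\mathbb{Z}_{np^k}\cong \mathbb{Z}_n\times\mathbb{Z}_{p^k}$, and here $A:=\mathbb{Z}_n$ is a finite abelian group with $p\nmid|A|$, while $P:=\mathbb{Z}_{p^k}$ is the Sylow $p$-subgroup of $G$. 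Hence the framework of Section~5 applies verbatim with these choices of $A$ and $P$.

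Next I would invoke Corollary~\ref{HcorNumLCDE}: the number of Hermitian complementary dual abelian codes in $\mathbb{F}_{p^{2\nu}}[\mathbb{Z}_n\times\mathbb{Z}_{p^k}]$ equals $2^{r_{\Ip}+r_{\IIp}}$, where $r_{\Ip}$ and $r_{\IIp}$ are determined by the $p^{2\nu}$-cyclotomic classes of $A=\mathbb{Z}_n$ alone and are independent of $P$. Consequently this value is the same for every $k\ge 0$, so the count for length $np^k$ coincides with the count for length $n$ (the case $k=0$), which establishes the first assertion of the corollary.

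For the explicit formula I would apply Proposition~\ref{HpropEnum} to $A=\mathbb{Z}_n$. The exponent of $\mathbb{Z}_n$ is $N=n$, and the number of elements of order $d$ in $\mathbb{Z}_n$ is $\mathcal{N}_{\mathbb{Z}_n}(d)=\Phi(d)$ for each divisor $d$ of $n$. Substituting these data into Proposition~\ref{HpropEnum} yields
\[
r_{\Ip}+r_{\IIp}=\sum_{d\mid n}\lambda(d,p^\nu)\frac{\Phi(d)}{{\rm ord}_d(p^\nu)}+\frac{1}{2}\sum_{d\mid n}\bigl(1-\lambda(d,p^\nu)\bigr)\frac{\Phi(d)}{{\rm ord}_d(p^\nu)},
\]
and raising $2$ to this power gives the stated number.

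Since every step is a direct specialization of an earlier result, I do not expect a genuine obstacle; the only points needing a line of justification are the isomorphism $\mathbb{Z}_{np^k}\cong\mathbb{Z}_n\times\mathbb{Z}_{p^k}$ that identifies $\mathbb{Z}_{p^k}$ with the Sylow $p$-subgroup of $G$, and the elementary identity $\mathcal{N}_{\mathbb{Z}_n}(d)=\Phi(d)$. In particular no re-derivation of cyclotomic-class sizes or of the $\lambda$-characterization of types is required beyond citing Lemma~\ref{HpropType} and Proposition~\ref{HpropEnum}.
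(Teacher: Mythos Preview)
Your proposal is correct and follows essentially the same approach as the paper: the paper simply remarks that for $A=\mathbb{Z}_n$ the exponent is $n$ and $\mathcal{N}_A(d)=\Phi(d)$, then deduces the corollary from Corollary~\ref{HcorNumLCDE} and Proposition~\ref{HpropEnum}. Your explicit identification $\mathbb{Z}_{np^k}\cong\mathbb{Z}_n\times\mathbb{Z}_{p^k}$ just makes the application of the $P$-independence statement transparent.
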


From Proposition \ref{HpropEnum},  for an arbitrary finite abelian group $A$ with $p\nmid |A|$,    we have ${r_{\Ip}+r_{\IIp}}\geq 1$.   Hence, there  exist at least two Hermitian complementary dual abelian codes  in  $\mathbb{F}_{p^\nu}[A]$.

 In Table \ref{T1}, the  number ${r_{\Ip}+r_{\IIp}}$ for group algebras $\mathbb{F}_4[A]$ is given for finite abelian  groups $A$ of odd order less than $50$. 
    
    \begin{table}[!hbt]
        \centering
 \begin{tabular}
     {|r|r|r|} \hline Order of $A$ & $A$~~~~~~ & $r_\Ip+r_\IIp$~~\\
     \hline  
     $3$ &$\mathbb{Z}_3$ &$ 3$\\
     $5$ &$\mathbb{Z}_5 $ &$ 2$\\
     $7$ &$\mathbb{Z}_7 $ &$ 2$\\
     $9$ &$\mathbb{Z}_3\times\mathbb{Z}_3 $ &$ 9$\\
     &$\mathbb{Z}_{3^2} $ &$ 5$\\
     $11$ &$\mathbb{Z}_{11}$ &$ 3$\\
     $13$ &$\mathbb{Z}_{13} $ &$2 $\\
     $15$ &$\mathbb{Z}_5\times\mathbb{Z}_3 $ &$ 6$\\
     $17$ &$\mathbb{Z}_{17} $ &$3 $\\
     $19$ &$\mathbb{Z}_{19} $ &$3 $\\
     $21$ &$\mathbb{Z}_7\times\mathbb{Z}_3 $ &$ 6$\\
     $23$ &$\mathbb{Z}_{23} $ &$2$\\
     $25$ &$ \mathbb{Z}_5\times\mathbb{Z}_5 $ &$ 7$\\
     &$\mathbb{Z}_{5^2} $ &$ 3$\\
     $27$ &$\mathbb{Z}_3\times\mathbb{Z}_3\times\mathbb{Z}_3 $ &$27 $\\
     &$\mathbb{Z}_{3^2}\times\mathbb{Z}_3 $ &$ 15$\\
     &$\mathbb{Z}_{3^3}$ &$ 7$\\
     $29$ &$\mathbb{Z}_{29}$ &$ 2$\\
     $31$ &$\mathbb{Z}_{31} $ &$4 $\\
     $33$ &$\mathbb{Z}_{11}\times\mathbb{Z}_3 $ &$ 9$\\
     $35$ &$\mathbb{Z}_{7}\times\mathbb{Z}_5 $ &$ 5$\\
     $37$ &$\mathbb{Z}_{37} $ &$ 2$\\
     $39$ &$\mathbb{Z}_{13}\times\mathbb{Z}_3 $ &$ 6$\\
     $41$ &$\mathbb{Z}_{41} $ &$3 $\\
     $43$ &$\mathbb{Z}_{43}$ &$ 7$\\
     $45$ &$\mathbb{Z}_5\times\mathbb{Z}_3\times\mathbb{Z}_3$ &$18 $\\
     &$\mathbb{Z}_5\times\mathbb{Z}_{3^2} $ &$ 9$\\
     $47$ &$\mathbb{Z}_{47} $ &$2 $\\
     $49$ &$ \mathbb{Z}_7\times\mathbb{Z}_7$ &$ 8$\\
     &$ \mathbb{Z}_{7^2}$&$ 3$\\
     \hline 
 \end{tabular}
        \caption{The number  $r_\Ip+r_\IIp$  for group algebras $\mathbb{F}_4[A]$} \label{T1}
    \end{table}

From the properties of elements in  $A=(\mathbb{Z}_{2^k})^s$ discussed in \cite[Sections 2 and 3]{PJD2018}, a simplified formula  for  \eqref{Hr12} is  given for some families of  finite abelian groups.

\begin{example} Let  $p$ be an odd prime  and let $A=(\mathbb{Z}_{2^k})^s$  for some positive integers $s$ and $k$.  Let $\gamma$ be the  largest integer $0 \leq \gamma \leq k$ such that $2^{\gamma} | (p^\nu + 1)$.  From \cite[Remark 2.7]{PJD2018},  we have   \[r_\Ip = 2^{\gamma s} \text{  and }
    r_{\IIp} =  \sum_{i=\gamma+1}^k\frac{2^{is} - 2^{(i-1)s}}{2{\rm ord}_{2^i}(p^{2\nu}) }.\] 
    Then   $ r_\Ip +  r_{\IIp} = 2^{\gamma s} + \sum_{i=\gamma+1}^k\frac{2^{is} - 2^{(i-1)s}}{2{\rm ord}_{2^i}(p^{2\nu}) }$, and hence, the number of Hermitian complementary dual abelian codes in $\mathbb{F}_{p^{2\nu}}[(\mathbb{Z}_{2^k})^s ]$ is 
    \[2^{2^{\gamma s} + \sum_{i=\gamma+1}^k\frac{2^{is} - 2^{(i-1)s}}{2{\rm ord}_{2^r}(p^{2\nu}) }}.\]
\end{example}

\begin{example} Let $p$ be a prime and let $q$ be an odd prime such that    $\gcd(q,p)=1$. Let  $A=(\mathbb{Z}_{q^k})^s$ for some positive integers $s$ and $k$.  From \cite[Corollary 3.12]{PJD2018}, we have the following two  cases.
    
    \noindent {\bf Case I:} There exists $a \in A \setminus \{0\}$ such that $S_{p^\nu}(h)$ is of type $\Ip$. By \cite[Corollary 3.12]{PJD2018},   we have that that $S_{p^{2\nu}}(a)$ is of type $\Ip$ for all $a\in A$  and  \[r_\Ip = 1+ \sum_{i=1}^{k}\frac{q^{is}-q^{(i-1)s}}{{\rm ord}_{q^i}(p^{2\nu}) } \text{ and } r_{\IIp}=0.\]

    In this case, we have  $r_\Ip+r_\IIp=r_\Ip$, and hence,    the number of Hermitian complementary dual abelian codes in $\mathbb{F}_{p^{2\nu}}[(\mathbb{Z}_{q^k})^s ]$ is 
    \[ 2^{1+ \sum_{i=1}^{k}\frac{q^{is}-q^{(i-1)s}}{{\rm ord}_{q^i}(p^{2\nu}) } }.\]

    \noindent {\bf Case II:}   There exists $a \in A \setminus \{0\}$ such that $S_{p^{2\nu}}(a)$ is of type $\IIp$. By  \cite[Corollary 3.12]{PJD2018} ,  we have  that $S_{p^{2\nu}}(a)$  is of type $\IIp$ for all  $a \in A \setminus \{0\}$,  and hence,  \[r_\Ip   =1 
    \text{ and } r_{\IIp}  = \sum_{i=1}^{k}\frac{q^{is}-q^{(i-1)s}}{2{\rm ord}_{q^i}(p^{2\nu}) }.\]  
    In this case, we have 
    \[r_\Ip+r_\IIp= 1+ \sum_{i=1}^{k}\frac{q^{is}-q^{(i-1)s}}{2{\rm ord}_{q^i}(p^{2\nu}) } ,\] and hence,   the number of Hermitian  complementary dual abelian codes in $\mathbb{F}_{p^{2\nu}}[(\mathbb{Z}_{q^k})^s ]$ is 
    \[ 2^{1+ \sum_{i=1}^{k}\frac{q^{is}-q^{(i-1)s}}{2{\rm ord}_{q^i}(p^{2\nu}) } }.\]
    
\end{example}

\section{Conclusion and Remarks}
In this paper, a family of abelian codes with complementary dual  in group algebras $\mathbb{F}_{p^\nu}[G]$  has been investigated  under both the Euclidean and Hermitian inner products. The characterization of such codes have been  given. Subsequently,   the number of complementary dual abelian codes in  $\mathbb{F}_{p^\nu}[G]$  has been shown to be independent of the Sylow $p$-subgroup of $G$ and it has been completely determined for every  finite abelian group $G$, prime $p$, and positive integer $\nu$.  A simplified formula for the enumeration has been provided in the  case where $G$ is a cyclic group or  a $q$-group with $p\ne q$. 

For application purpose, it is of natural interest to continue the study on the efficiency of codes in this family. Hence, the determination of their minimum distances is an interesting problem as well. 







\end{document}